\title{Weighted Linear Dynamic Logic}
\author{Manfred Droste
\institute{Institut f\"{u}r Informatik \\ Universit\"{a}t Leipzig \\D-04109 Leipzig, Germany}
\email{droste@informatik.uni-leipzig.de}
\and
George Rahonis
\institute{Department of Mathematics \\Aristotle University of Thessaloniki\\54124 Thessaloniki, Greece}
\email{grahonis@math.auth.gr}
}
\newtheorem{theorem}{Theorem}
\newtheorem{definition}[theorem]{Definition}
\newtheorem{example}[theorem]{Example}
\newtheorem{lemma}[theorem]{Lemma}
\newtheorem{corollary}[theorem]{Corollary}
\newtheorem{remark}[theorem]{Remark}
\newtheorem{proposition}[theorem]{Proposition}
\newenvironment{proof}{\textbf{Proof.}}
\begin{document}

\maketitle

\begin{abstract}
We introduce a weighted linear dynamic logic (weighted \emph{LDL} for short) and
show the expressive equivalence of its formulas to weighted rational
expressions. This adds a new characterization for recognizable series to the fundamental Sch\"{u}tzenberger theorem. Surprisingly, the equivalence does not require any restriction to our weighted \emph{LDL}. Our results hold over arbitrary (resp. totally complete)
semirings for finite (resp. infinite) words. As a consequence, the equivalence problem for weighted \emph{LDL} formulas over fields is decidable in doubly exponential time. In contrast to classical logics, we show that our weighted \emph{LDL} is expressively incomparable to weighted \emph{LTL} for finite words. We determine a fragment of the weighted \emph{LTL} such that series over finite and infinite words definable by \emph{LTL} formulas in this fragment are definable also by weighted \emph{LDL} formulas. 
 \end{abstract}

\section{Introduction}

Linear Temporal Logic (\emph{LTL} for short) is widely used in several areas
of Computer Science like, for instance in model
checking where it plays the role of a specification language
\cite{Ba:Pr,Gi:Li}, and in artificial intelligence \cite{Gi:Li}. Nevertheless, \emph{LTL} formulas are expressively weaker
than finite automata, namely the class of \emph{LTL}-definable languages
coincides with the class of First-Order (\emph{FO} for short) logic definable
languages (cf. \cite{Di:Fi} for an excellent survey on the topic). Therefore,
it was greatly desirable, especially for applications, to
have a logic which combines the complexity properties of reasoning on
\emph{LTL} and the expressive power of finite automata. This was recently
achieved in \cite{Gi:Li}, where the authors introduced a \emph{Linear Dynamic
Logic }(\emph{LDL\ }for short) which is a combination of \emph{Propositional
Dynamic Logic} (cf. \cite{Ha:Dy}) and \emph{LTL}. The satisfiability, validity, and logical
implication of \emph{LDL }formulas interpreted over finite words were proved
to be PSPACE-complete \cite{Gi:Li,Gi:Sy}, as for \emph{LTL}. This was obtained by a translation of \emph{LDL }formulas
to finite automata. Similar
results were stated for \emph{LDL }formulas interpreted over infinite words in
\cite{Va:Th}.

In the weighted setup, a B\"{u}chi type theorem stating the coincidence of recognizable series with the ones
defined in a fragment of a weighted Monadic Second-Order (\emph{MSO }for
short) logic over semirings, was firstly proved in \cite{Dr:We} (cf. also
\cite{Dr:Wh}). Then, weighted
\emph{MSO\ }logics have been investigated for several objects, including trees,
pictures, nested words, graphs, and timed words. The weight structure of
the semiring has been also replaced by more general ones incorporating average or discounting of weights. Most of the results work for finite as well as infinite objects. A weighted version
of \emph{LTL }over De Morgan algebras was firstly introduced in \cite{Ku:La}.
In \cite{Dr:Mu} the authors proved several characterizations of \emph{LTL}%
-definable and \emph{LTL}-$\omega$-definable series over arbitrary bounded
lattices. Recently, a weighted \emph{LTL} with averaging modalities was studied in \cite{Bo:Av}, and a weighted \emph{LTL} over idempotent and zero-divisor free semirings satisfying
completeness axioms was investigated in \cite{Ma:Ph,Ma:Se}. In \cite{Al:Fo,Al:Di} the authors considered a discounted \emph{LTL} with values in $[0,1]$ and in \cite{Ma:Ph,Ma:We} in the max-plus semiring.

It is the goal of this paper to introduce and investigate a \emph{weighted LDL} over
arbitrary semirings. Our work is motivated as follows. In recent applications like verification of systems \cite{Ch:Qu} and artificial intelligence (cf. for instance \cite{La:Th}), classical automata have been replaced by quantitative ones. Therefore it is highly desirable to have a quantitative logic which is expressively equivalent to weighted automata. However, the class of series which are definable by all weighted \emph{MSO} logic sentences exceeds that of recognizable series. Furthermore, the weighted \emph{FO} logic over finite words is, in general, expressively incomparable to weighted finite automata \cite{Dr:We}, and this is shown here also for the weighted \emph{LTL}.  Therefore, in view of the results of \cite{Gi:Li,Gi:Sy} for \emph{LDL}, we investigate weighted \emph{LDL}. We show that our weighted \emph{LDL} is expressively equivalent to weighted finite automata over semirings. Surprisingly, there is no need to consider, as for the weighted \emph{MSO} logic, any fragment of our logic to achieve the aforementioned equivalence. Our results hold for finite and infinite words and this shows the robustness of our theory and in turn the robustness of the \emph{LDL} of \cite{Gi:Li,Gi:Sy,Va:Th}.  Our main results are as follows. 
\begin{itemize}
\item The class of \emph{LDL}-definable series coincides with the class of generalized rational series over arbitrary semirings.
\item The class of \emph{LDL}-definable series coincides with the class of recognizable series over commutative semirings. This extends the fundamental Sch\"{u}tzenberger theorem, for commutative semirings, with a logic directed characterization.
\item The equivalence problem for weighted \emph{LDL} formulas is decidable in doubly exponential time for a large class of weight structures including computable fields, as the realizability problem for \emph{LDL} \cite{Gi:Sy}.    
\item The class of \emph{LDL}-$\omega$-definable series coincides with the class of generalized $\omega$-rational series over totally complete semirings.
\item The class of \emph{LDL}-$\omega$-definable series coincides with the class of $\omega$-recognizable series over totally commutative complete semirings. 
\end{itemize}
Our weighted \emph{LDL} consists of the classical, unweighted \emph{LDL}
of \cite{Gi:Li} with the same interpretation and a
copy of it which is interpreted quantitatively.
Therefore, practitioners can use the classical \emph{LDL} part
as they are used to, and the copy of it in the same way
to compute quantitative interpretation. A similar approach
was followed for weighted \emph{MSO} logic recently in \cite{Ga:Un}. While the translation of the restricted weighted \emph{MSO} logic formulas of \cite{Dr:Wh}
to weighted automata as for \emph{MSO} is non-elementary,
the translation of the present weighted \emph{LDL} into weighted
automata can be done in doubly exponential time, as for \emph{LDL}.
We prove that our weighted \emph{LDL} interpreted over finite words, is in general expressively incomparable to weighted \emph{LTL} of \cite{Ma:Ph,Ma:Se}. We define a fragment of that weighted \emph{LTL} and prove that series over finite and infinite words definable by weighted \emph{LTL} formulas in this fragment are definable as well by weighted \emph{LDL} formulas. Furthermore, our weighted \emph{LDL} is expressively equivalent to weighted conjunction-free $\mu$-calculus \cite{Me:We} for a particular class of semirings.

\section{Semirings and rational operations}

Let $A$ be an alphabet, i.e., a finite nonempty set. As usually, we denote by
$A^{\ast}$ (resp. $A^{\omega}$) the set of all finite (resp. infinite) words
over $A$ and $A^{+}=A^{\ast}\setminus\{\varepsilon\}$, where $\varepsilon$ is
the empty word. We write a finite (resp. infinite) word often as $w=w(0)\ldots w(n-1)$ (resp. $w=w(0)w(1)\ldots$) where $w(i) \in A$ for every
$i\geq0$. For every finite (resp. infinite) word $w=w(0)\ldots w(n-1)$ (resp.
$w=w(0)w(1)\ldots$) and every $0\leq i\leq n-1$ (resp. $i\geq0$) we denote by
$w_{\geq i}$ the suffix $w(i)\ldots w(n-1)$ (resp. $w(i)w(i+1)\ldots$) of $w$.
In the sequel, we use the letter $a$ with indices to denote the elements of an
alphabet $A$.

A \emph{semiring\/ }$(K,+,\cdot,0,1)$ is denoted
simply by $K$ if the operations and the constant elements are understood. If
no confusion is caused, we shall denote the operation $\cdot$ simply by
concatenation. The result of the empty product as usual equals to $1$.

$\quad$ \emph{Throughout the paper }$A$\emph{ will denote an alphabet} and $K$\emph{ a
semiring.}

A \emph{formal series} (or simply \emph{series}) \emph{over}
$A^*$ \emph{and} $K$ is a mapping $s:A^*\rightarrow K$. We denote by $K\left\langle \left\langle A^*\right\rangle
\right\rangle $ the class of all series over $A^*$ and $K$. The \emph{constant series}
$\widetilde{k}$ ($k\in K$) is defined, for every $w\in A^*$,$\ $by
$\widetilde{k}(w)=k$. The \emph{characteristic series }$1_{L}$ \emph{of a language
}$L\subseteq A^*$ is given by $1_{L}(w)=1$ if $w\in L$ and $1_{L}(w)=0$
otherwise. If $L=\{w\}$ is a singleton, then we write $w$
in place of $1_{\{w\}}$.  
Let $s,r\in K\left\langle \left\langle A^*\right\rangle \right\rangle $ and
$k\in K$. The \emph{sum} $s+r$, the \emph{products with scalars} $ks$ and $sk$
as well as the\emph{ Hadamard product} $s\odot r$\ are defined elementwise by
$\ (s+r)(w)=s(w)+r(w),$ $(ks)(w)=ks(w),$ \ \ $(sk)(w)=s(w)k,$ $\ (s\odot
r)(w)=s(w)r(w)$ \ for every $w\in A^*$. Trivially, the
structure $\left(  K\left\langle \left\langle A^*\right\rangle \right\rangle
,+,\odot,\widetilde{0},\widetilde{1}\right)  $ is a semiring. The \emph{Cauchy product }$s\cdot
r\in K\left\langle \left\langle A^{\ast}\right\rangle \right\rangle $ is determined by $(s\cdot r)(w)=\sum\nolimits_{w=uv}s(u)r(v)$ for every $w\in A^{\ast}$. The $n$\emph{th-iteration }$s^{n}\in K\left\langle
\left\langle A^{\ast}\right\rangle \right\rangle $ ($n\geq0$) is
defined inductively by $s^{0}=\varepsilon$ and $s^{n+1}=s\cdot s^{n}$ for
every $n\geq 0$. The series $s$ is called \emph{proper} if $s(\varepsilon)=0$.
If $s$ is proper, then for every $w\in A^{\ast}$ and $n>\left\vert
w\right\vert $ we have $s^{n}(w)=0$. The \emph{iteration }$s^{+}\in
K\left\langle \left\langle A^{\ast}\right\rangle \right\rangle $ \emph{of a
proper series }$s$ is defined by $s^{+}=\sum\nolimits_{n>0}s^{n}$. 

The
class of \emph{weighted rational expressions over }$A$ \emph{and} $K$ \cite{Dr:Han} is given
by the grammar $E::=ka\mid E+E\mid E\cdot E\mid E^{+}$ 
where $k\in K$ and $a\in A\cup\{\varepsilon\}$. We denote by $RE(K,A)$ the
class of all such weighted rational expressions over $A$ and $K$. For the
relationship with weighted logics, we will need to consider the Hadamard
product as a rational operation. Therefore, we introduce the class of
\emph{generalized weighted rational expressions over }$A$ \emph{and} $K$ which
is given by the grammar $E::=ka\mid E+E\mid E\cdot E\mid E^{+}\mid E\odot E $, where $k\in K$ and $a\in A\cup\{\varepsilon\}$. We shall denote by $GRE(K,A)$
the class of generalized weighted rational expressions over $A$ and $K$. The
\emph{semantics} of a (generalized) weighted rational expression $E$ is a
series $\left\Vert E\right\Vert \in K\left\langle \left\langle A^{\ast
}\right\rangle \right\rangle $ which is defined inductively by $\left\Vert ka\right\Vert =ka,  \ \ \left\Vert E+E^{\prime}\right\Vert =\left\Vert E\right\Vert
+\left\Vert E^{\prime}\right\Vert ,  \ \ \left\Vert E\cdot E^{\prime}\right\Vert =\left\Vert E\right\Vert
\cdot\left\Vert E^{\prime}\right\Vert , \ \  \left\Vert E^{+}\right\Vert =\left\Vert E\right\Vert ^{+}$ (if $\left\Vert E\right\Vert $ is proper; otherwise undefined),  \ \  $ \left\Vert E\odot E^{\prime}\right\Vert =\left\Vert E\right\Vert
 \odot\left\Vert E^{\prime}\right\Vert$. 
A series $s\in K\left\langle \left\langle A^{\ast}\right\rangle \right\rangle
$ is called \emph{rational }(resp. \emph{g-rational}) if there is a weighted
(resp. generalized weighted) rational expression $E$ such that $s=\left\Vert
E\right\Vert $. The following result is the fundamental Sch\"{u}tzenberger theorem stating the coincidence of rational and recognizable series, i.e., series accepted by weighted automata. For the theory on weighted automata we refer the reader to \cite{Es:Ha,Sa:Han,Dr:Au}.

\begin{theorem}
\label{Kleene}\cite{Sc:On,Es:Ha,Sa:Han}
Let $K$ be a semiring and $A$\ an alphabet.
Then a series $s\in K\left\langle \left\langle A^{\ast}\right\rangle
\right\rangle $ is rational iff it is recognizable.
\end{theorem}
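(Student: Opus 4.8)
The plan is to establish the two inclusions separately, adapting the classical Kleene construction and Sch\"{u}tzenberger's linear-algebra argument to the weighted setting over an arbitrary semiring. Recall that a series $s\in K\langle\langle A^{\ast}\rangle\rangle$ is \emph{recognizable} if there exist a finite set $Q$, a monoid morphism $\mu:A^{\ast}\to K^{Q\times Q}$, and vectors $\lambda\in K^{1\times Q}$, $\gamma\in K^{Q\times 1}$ with $s(w)=\lambda\,\mu(w)\,\gamma$ for every $w\in A^{\ast}$; I will phrase both directions in terms of such linear representations rather than in terms of automaton diagrams, and I will use freely that $\big(K\langle\langle A^{\ast}\rangle\rangle,+,\cdot,\widetilde{0},\varepsilon\big)$, with the Cauchy product as multiplication, is a semiring, so that matrix products and matrix stars over it are meaningful.

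For the direction ``rational $\Rightarrow$ recognizable'' I would induct on the structure of a weighted rational expression. The base cases are immediate: $k\varepsilon$ is represented by a single state carrying initial weight $k$, final weight $1$ and no transitions, while $ka$ (for $a\in A$) is represented by two states, initial weight $1$ on the first, final weight $1$ on the second, and a single transition of weight $k$ labelled $a$ from the first to the second. For the inductive step I would verify the three closure properties by explicit block constructions: for $s_{1}+s_{2}$ take $Q=Q_{1}\sqcup Q_{2}$ with $\lambda=(\lambda_{1},\lambda_{2})$, $\mu(a)=\mathrm{diag}\big(\mu_{1}(a),\mu_{2}(a)\big)$, $\gamma=\binom{\gamma_{1}}{\gamma_{2}}$; for the Cauchy product $s_{1}\cdot s_{2}$ again $Q=Q_{1}\sqcup Q_{2}$, but now with the ``bridge'' block $\mu(a)=\left(\begin{smallmatrix}\mu_{1}(a)&\mu_{1}(a)\gamma_{1}\lambda_{2}\\0&\mu_{2}(a)\end{smallmatrix}\right)$, $\lambda=(\lambda_{1},\,s_{1}(\varepsilon)\lambda_{2})$ and $\gamma=\binom{0}{\gamma_{2}}$; and for the iteration of a \emph{proper} series $s$ with representation $(\lambda,\mu,\gamma)$ take the feedback representation $(\lambda,\mu',\gamma)$ with $\mu'(a)=\mu(a)\big(I+\gamma\lambda\big)$ for $a\in A$. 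In each case a short induction on $|w|$ shows that the new representation computes $s_{1}+s_{2}$, $s_{1}\cdot s_{2}$, and $s^{+}$, respectively; in the third case properness of $s$, i.e.\ $\lambda\gamma=0$, is exactly what makes the feedback sum $\sum_{n>0}s^{n}$ finite at each word and gives the empty word weight $0$, matching the side condition imposed on $E^{+}$ in the semantics, so the induction goes through.

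For the converse ``recognizable $\Rightarrow$ rational'', start from a representation $(\lambda,\mu,\gamma)$ over $Q$ and form the matrix $M\in\big(K\langle\langle A^{\ast}\rangle\rangle\big)^{Q\times Q}$ with $M_{p,q}=\sum_{a\in A}\mu(a)_{p,q}\,a$. Every entry of $M$ is a proper series, and in fact equals $\Vert E\Vert$ for a suitable $E\in RE(K,A)$. Since a product of $n$ proper series vanishes on all words of length less than $n$, each entry of $M^{n}$ vanishes on words shorter than $n$, so $M^{\ast}:=\sum_{n\geq 0}M^{n}$ is well defined entrywise; unravelling the morphism property of $\mu$ gives $(M^{\ast})_{p,q}=\sum_{w\in A^{\ast}}\mu(w)_{p,q}\,w$, and therefore $s=\sum_{p,q\in Q}\lambda_{p}\,(M^{\ast})_{p,q}\,\gamma_{q}$. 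It then remains to show that each entry of $M^{\ast}$ is rational, which I would prove by induction on $|Q|$: for $|Q|=1$ write $M=(m)$ and use $m^{\ast}=\varepsilon+m^{+}$, noting that $\varepsilon$, $m$ and $m^{+}$ are all rational; for the inductive step split $Q$ into two nonempty parts, write $M$ in the corresponding $2\times 2$ block form, and apply the block-star identity expressing the four blocks of $M^{\ast}$ through $M_{11}^{\ast}$, $M_{22}^{\ast}$, $(M_{11}+M_{12}M_{22}^{\ast}M_{21})^{\ast}$ and its symmetric counterpart, checking that every matrix whose star occurs still has proper entries (so that the star exists) and then invoking the induction hypothesis together with the trivial closure of rational series under $+$ and $\cdot$.

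The step I expect to be the main obstacle is the ``recognizable $\Rightarrow$ rational'' direction, specifically justifying that $M^{\ast}$ and all the auxiliary stars occurring in the block identity actually exist over the non-complete semiring $K\langle\langle A^{\ast}\rangle\rangle$; this is exactly where one must keep track of properness, which is preserved by finite sums and by products with at most one non-proper factor but is destroyed by the star operation itself. In the opposite direction the remaining work is the routine verification that the bridge and feedback representations compute the intended behaviors, with neither idempotency nor completeness of $K$ available.
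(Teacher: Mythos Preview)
The paper does not prove this statement at all: Theorem~\ref{Kleene} is stated with citations to \cite{Sc:On,Es:Ha,Sa:Han} and is used as a black box, so there is no proof in the paper to compare your proposal against. What you have written is essentially the standard Sch\"{u}tzenberger argument as found in those references (linear representations, block constructions for the closure properties, and the block-star/state-elimination argument for the converse), and the details you give are correct, including the properness bookkeeping for $M^{\ast}$ and the auxiliary stars in the $2\times 2$ block identity.

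One small point worth tightening: in the Cauchy-product construction your final vector $\gamma=\binom{0}{\gamma_{2}}$ together with $\lambda=(\lambda_{1},\,s_{1}(\varepsilon)\lambda_{2})$ does give the right behavior, but you should state explicitly why the upper-right block of $\mu(w)$ equals $\sum_{w=uv,\,u\neq\varepsilon}\mu_{1}(u)\gamma_{1}\lambda_{2}\mu_{2}(v)$; and in the iteration construction you silently use $\lambda\gamma=0$ to cancel the trailing $(I+\gamma\lambda)$ against $\gamma$, which is exactly where properness enters, so make that step visible. Otherwise your plan matches the textbook proofs the paper is citing.
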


It is well-known (cf. \cite{Sc:Th,Be:Ra,Dr:Han}) that if the semiring $K$ is commutative, then the class of recognizable series over $A$ and $K$ is closed under Hadamard product.
Consequently, if $K$ is commutative, then a series $s\in K\left\langle \left\langle A^{\ast}\right\rangle
\right\rangle $ is g-rational iff it is recognizable.

\section{Weighted linear dynamic logic on finite words}

In this section, we introduce the weighted linear dynamic logic (weighted \emph{LDL} for short). Our main
result states the coincidence of the classes of g-rational series and series definable by weighted
\emph{LDL} formulas. First, we recall the \emph{LDL} from
\cite{Gi:Li}. For the definition of our weighted \emph{LDL} below, we need to
modify the notations used for the semantics of \emph{LDL} formulas
in \cite{Gi:Li}.
For every letter $a\in A$ we consider an atomic proposition $p_{a}$ and we let
$P=\{p_{a}\mid a\in A\}$. For every $p\in P$ we identify $\lnot\lnot p$ with
$p.$

\begin{definition}
\label{LDL_fin_synt}The syntax of LDL formulas $\psi$ \emph{over} $A$ is given by the grammar
\begin{align*}
\psi &  ::=true\mid p_{a}\mid\lnot\psi\mid\psi\wedge\psi\mid\left\langle
\theta\right\rangle \psi\\
\theta &  ::=\phi\mid\psi?\mid\theta+\theta\mid\theta;\theta\mid\theta^{+}%
\end{align*}
where $p_{a}\in P$ and $\phi$ denotes a propositional formula over the atomic
propositions in $P$.
\end{definition}

Next, for every \emph{LDL }formula $\psi$ and $w\in A^{\ast}$ we define the
satisfaction relation $w\models\psi$, inductively on the structure of $\psi$,
as follows:

\begin{itemize}
\item[-] $w\models true,$

\item[-] $w\models p_{a}$ \ iff $\ w(0)=a,$

\item[-] $w\models\lnot\psi$ \ iff $\ w\not \models \psi,$

\item[-] $w\models\psi_{1}\wedge\psi_{2}$ \ iff $\ w\models\psi_{1}$ and
$w\models\psi_{2},$

\item[-] $w\models\left\langle \phi\right\rangle \psi$ \ iff $\ w\models\phi$
and $w_{\geq1}\models\psi,$

\item[-] $w\models\left\langle \psi_{1}?\right\rangle \psi_{2}$ \ iff
$\ w\models\psi_{1}$ and $w\models\psi_{2},$

\item[-] $w\models\left\langle \theta_{1}+\theta_{2}\right\rangle \psi$ \ iff
$\ w\models\left\langle \theta_{1}\right\rangle \psi$ or $w\models\left\langle
\theta_{2}\right\rangle \psi,$

\item[-] $w\models\left\langle \theta_{1};\theta_{2}\right\rangle \psi$ \ iff
$\ w=uv$, $u\models\left\langle \theta_{1}\right\rangle true$, and
$v\models\left\langle \theta_{2}\right\rangle \psi,$

\item[-] $w\models\left\langle \theta^{+}\right\rangle \psi$ \ iff \ there
exists $n$ with $1\leq n\leq\left\vert w\right\vert $ such that $w\models
\left\langle \theta^{n}\right\rangle \psi,$
\end{itemize}

\noindent where $\theta^{n},$ $n\geq1$ is defined inductively by $\theta
^{1}=\theta$ and $\theta^{n}=\theta^{n-1};\theta$ for $n>1$.

We let $false=\lnot true$. For an \emph{LDL
}formula $\psi$, we let $L(\psi)=\{w\in A^{\ast}\mid w\models\psi\}$, the language defined by $\psi$. A
language $L\subseteq A^{\ast}$ is called \emph{LDL-definable }if there is an
\emph{LDL} formula $\psi$ such that $L=L(\psi)$. 

\begin{theorem}
\label{LDL_RE}\cite{Gi:Li} A language $L\subseteq A^{\ast}$ is LDL-definable
iff $L$ is rational.
\end{theorem}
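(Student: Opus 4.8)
The plan is to establish Theorem~\ref{LDL_RE} by proving both inclusions between the class of \emph{LDL}-definable languages and the class of rational (equivalently, by Theorem~\ref{Kleene}, recognizable) languages over $A^{\ast}$. Since the unweighted setting here is exactly the Boolean semiring, rational means regular and recognizable means recognized by a finite automaton, so this is really a statement that \emph{LDL} over finite words captures precisely the regular languages; I would prove it by effective back-and-forth translations.

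For the direction ``\emph{LDL}-definable implies rational'', the plan is to argue by induction on the structure of \emph{LDL} formulas that $L(\psi)$ is regular, simultaneously maintaining an auxiliary claim about path expressions: for every path expression $\theta$ the binary relation $R(\theta)=\{(u,v)\mid uv\in A^{\ast},\ v=w_{\geq i}\text{ for a decomposition witnessing }\theta\}$ is a rational transduction, or more conveniently, that for each \emph{LDL} formula $\psi$ the language $L(\langle\theta\rangle\psi)$ is regular whenever $L(\psi)$ is. The base cases $true$, $p_a$ and the Boolean cases $\lnot,\wedge$ are immediate (regular languages are closed under complement and intersection). For $\langle\phi\rangle\psi$ one intersects with the regular set of words whose first letter satisfies $\phi$ and applies a left-quotient by single letters; for $\langle\psi'?\rangle\psi$ one simply intersects $L(\psi')\cap L(\psi)$; for $\theta_1+\theta_2$ one takes a union; for $\theta_1;\theta_2$ one uses concatenation-style reasoning, namely $L(\langle\theta_1;\theta_2\rangle\psi)=\{uv\mid u\models\langle\theta_1\rangle true,\ v\models\langle\theta_2\rangle\psi\}$, which is handled by a product automaton that guesses the split point; and for $\theta^{+}$ one iterates, using a Kleene-plus construction on the ``one-step'' automaton associated to $\theta$. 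The cleanest way to organize all of this, and the way I would actually write it, is to build directly a finite automaton $\mathcal{A}_\psi$ by a Fischer--Ladner style closure: take as states suitable subsets of the (finite) Fischer--Ladner closure of $\psi$, with transitions dictated by the semantics above; this is the standard automata-theoretic treatment of \emph{PDL}/\emph{LDL} and gives the translation in (singly) exponential size.

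For the converse, ``rational implies \emph{LDL}-definable'', the plan is to start from a deterministic finite automaton $\mathcal{A}=(Q,A,\delta,q_0,F)$ for $L$ and to express acceptance by an \emph{LDL} formula. The key observation is that path expressions are essentially regular expressions over $A$ enriched with tests, so for each pair of states $q,q'$ one can write a path expression $\theta_{q,q'}$ (built by the standard state-elimination/Kleene construction on $\mathcal{A}$, using $p_a$ as atomic steps and $;,+,{}^{+}$, together with possibly a test $(\lnot true)?$ or trivial tests to handle empty runs) describing exactly the nonempty words labeling a run from $q$ to $q'$. Then $L(\mathcal{A})$ is defined by $\bigvee_{q\in F}\langle\theta_{q_0,q}\rangle end$, where $end$ is an \emph{LDL} formula true exactly at the empty suffix; $end$ can be written as $\lnot\bigvee_{a\in A} p_a$ (a word satisfies it iff it has no letter at position $0$, i.e.\ it is empty), so the ``$\langle\theta\rangle\,end$'' idiom asserts that the whole word decomposes according to $\theta$ with nothing left over. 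One must also treat $\varepsilon\in L$ separately, adding the disjunct $end$ when $q_0\in F$.

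I expect the main obstacle to be bookkeeping about the empty word and the exact match between the ``$\theta^{+}$'' semantics (which requires $1\le n\le|w|$, i.e.\ at least one and at most $|w|$ iterations) and ordinary Kleene star; in particular $\langle\theta\rangle true$ is satisfied only by words that are nonempty and begin with a $\theta$-step, so path expressions naturally describe nonempty words, and one has to be careful when the regular expression for $L$ (or a sub-run of $\mathcal{A}$) needs the empty word — this is precisely why we use the $end$ test as the ``continuation'' formula and why $\varepsilon\in L$ needs its own disjunct. The other point needing care is verifying, in the forward direction, that the semantic clauses genuinely yield the inductive closure properties of regular languages without circularity — the Fischer--Ladner closure being finite is what makes the whole construction terminate — but this is routine once the closure set is defined correctly. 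Everything here is effective, so the resulting equivalence is constructive, which is what later sections will need for the complexity statements.
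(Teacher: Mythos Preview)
The paper does not give its own proof of Theorem~\ref{LDL_RE}; the result is simply quoted from \cite{Gi:Li}, and later the paper only relies on the fact (used in Proposition~\ref{LDL_to_aut} and the surrounding discussion) that the translation of an \emph{LDL} formula to a finite automaton can be done via the Fischer--Ladner/alternating-automaton construction of \cite{Gi:Li,Gi:Sy}. Your proposal is correct and matches that standard treatment: the Fischer--Ladner closure yields the ``\emph{LDL} $\Rightarrow$ regular'' direction, and encoding a DFA (or a regular expression) through path expressions with an $end$ test---which the paper itself writes as $\bigwedge_{a'\in A}\lnot p_{a'}$ in Example~\ref{ex1}---gives the converse. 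The empty-word and $\theta^{+}$ boundary issues you flag are real but, as you say, routine; there is no substantive gap.
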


\begin{definition}
\label{wLDL_fin_synt}The syntax of formulas $\varphi$ of the \emph{weighted} LDL \emph{over} $A$ \emph{and} $K$
is given by the grammar
\begin{align*}
\varphi &  ::=k\mid\psi\mid\varphi\oplus\varphi\mid\varphi\otimes\varphi
\mid\left\langle \rho\right\rangle \varphi\\
\rho &  ::=\phi\mid\varphi?\mid\rho\oplus\rho\mid\rho\cdot\rho\mid\rho
^{\oplus}%
\end{align*}
where $k\in K$, $\phi$ denotes a propositional formula over
the atomic propositions in $P$, and $\psi$ denotes an LDL formula as in Definition \ref{LDL_fin_synt}.
\end{definition}

We denote by $LDL(K,A)$ the set of all weighted \emph{LDL} formulas $\varphi$
over $A$ and $K$. We represent the semantics $\left\Vert \varphi\right\Vert $
of formulas $\varphi\in LDL(K,A)$ as series in $K\left\langle \left\langle
A^{\ast}\right\rangle \right\rangle $. For the semantics of \emph{LDL}
formulas $\psi$ we use the satisfaction relation as defined above. 

\begin{definition}
\label{def-sem}Let $\varphi\in LDL(K,A)$. The \emph{semantics} of $\varphi$ is
a series $\left\Vert \varphi\right\Vert \in K\left\langle \left\langle
A^{\ast}\right\rangle \right\rangle $. For every $w\in A^{\ast}$ the value
$\left\Vert \varphi\right\Vert (w)$ is defined inductively as follows:

$\begin{array}[c]{ll}
\left\Vert k\right\Vert (w)=k,  &      \left\Vert \varphi_{1}\oplus\varphi_{2}\right\Vert (w)=\left\Vert
\varphi_{1}\right\Vert (w)+\left\Vert \varphi_{2}\right\Vert (w), \\
\left\Vert \psi\right\Vert (w)=\left\{
\begin{array}
[c]{rl}%
1 & \text{if }w\models\psi\\
0 & \text{otherwise}%
\end{array}
\right.  ,         &     \left\Vert \varphi_{1}\otimes\varphi_{2}\right\Vert (w)=\left\Vert
\varphi_{1}\right\Vert (w)\cdot\left\Vert \varphi_{2}\right\Vert (w), \\
\left\Vert \left\langle \phi\right\rangle \varphi\right\Vert
(w)=\left\Vert \phi\right\Vert (w)\cdot\left\Vert \varphi\right\Vert
(w_{\geq1}),  &    \left\Vert \left\langle \varphi_{1}?\right\rangle \varphi
_{2}\right\Vert (w)=\left\Vert \varphi_{1}\right\Vert (w)\cdot\left\Vert
\varphi_{2}\right\Vert (w),\\
\left\Vert \left\langle \rho_{1}\oplus\rho_{2}\right\rangle
\varphi\right\Vert (w)=\left\Vert \left\langle \rho_{1}\right\rangle
\varphi\right\Vert (w)+\left\Vert \left\langle \rho_{2}\right\rangle
\varphi\right\Vert (w),  &    \left\Vert \left\langle \rho^{\oplus}\right\rangle \varphi
\right\Vert (w)=\underset{n \geq 1 }{\sum}\left\Vert
\left\langle \rho^{n}\right\rangle \varphi\right\Vert (w), 
\end{array}$

$\begin{array}[c]{ll}
\left\Vert \left\langle \rho_{1}\cdot\rho_{2}\right\rangle
\varphi\right\Vert (w)=\underset{w=uv}{\sum}\left(  \left\Vert \left\langle
\rho_{1}\right\rangle true\right\Vert (u)\cdot\left\Vert \left\langle \rho
_{2}\right\rangle \varphi\right\Vert (v)\right),
\end{array}$

\noindent where for the definition of $\left\Vert \left\langle \rho^{\oplus}\right\rangle \varphi
\right\Vert (w)$ we assume that $\left\Vert \left \langle \rho \right\rangle true \right\Vert$ is proper, and $\rho^{n},$ $n\geq1$ is defined inductively by $\rho^{1}=\rho$
and $\rho^{n}=\rho^{n-1} \cdot\rho$ for $n>1$.
\end{definition}

A series $s\in K\left\langle \left\langle A^{\ast}\right\rangle \right\rangle
$ is called \emph{LDL-definable }if there is a formula $\varphi\in LDL(K,A)$
such that $s=\left\Vert \varphi\right\Vert $. 
For $K=\mathbb{B}$ (the Boolean semiring) and any $L \subseteq A^*$, clearly $L$ is \emph{LDL}-definable iff $1_L \in \mathbb{B} \left\langle \left\langle A^{\ast}\right\rangle \right\rangle$ is \emph{LDL}-definable, and therefore our weighted \emph{LDL} generalizes \emph{LDL}.

\begin{example}\label{ex1}
We consider the semiring $(\mathbb{N},+, \cdot, 0, 1)$ of natural numbers, $a \in A$, $k \in \mathbb{N} \setminus \{0\}$, and  the weighted LDL formula 
$$\varphi  = \left\langle  \left( \left( \left\langle \left(k\otimes p_{a}\right)  ?\right\rangle Last \right)? \cdot \left( \left\langle \left(k\otimes p_{a}\right)  ?\right\rangle Last \right)? \right) ^{\oplus} \right\rangle true \oplus \bigwedge\nolimits_{a'\in A}\lnot p_{a'} ,$$
where $Last$ denotes the LDL formula $Last::=\left\langle true\right\rangle \bigwedge\nolimits_{a'\in A}\lnot p_{a'}$.  
For every $w=a_{0}\ldots a_{n-1}\in A^{\ast}$ and $0\leq i\leq
n-1$ we get  
\[
w_{\geq i}\models Last\text{ \ iff \ }w_{\geq i+1}\not \models p_{a'} \ \text{ for every }a'\in A\text{ \ iff
\ }i=n-1,
\]
and we can easily see that $ \left\Vert \varphi \right\Vert(w)= k^{2n}$ whenever $w=a^{2n}$ for some $n\geq 0$, and $ \left\Vert \varphi \right\Vert(w)= 0$ otherwise. Furthermore, the series $ \left\Vert \varphi \right\Vert$ is not definable by any weighted FO logic sentence (cf. \cite{Dr:We}) or weighted LTL formula (cf. Section 5). Indeed, let us assume that there is a weighted FO logic sentence (resp. LTL formula) $\varphi '$ such that $ \left\Vert \varphi ' \right\Vert =  \left\Vert \varphi \right\Vert$. Then, by replacing the non zero weights in $\varphi '$ with $true$ we get an FO logic sentence (resp. LTL formula) $\varphi ''$ whose language is $(aa)^*$, which is impossible (cf. \cite{Di:Fi}).   
\end{example}

Next we show that generalized weighted rational expressions can be translated to weighted \emph{LDL} formulas in linear time. 
\begin{theorem}
\label{RE_to_LDL}For every generalized weighted rational expression $E\in
GRE(K,A)$ we can construct, in linear time, a weighted LDL formula
$\varphi_{E}\in LDL(K,A)$ with $\left\Vert \varphi_{E}\right\Vert =\left\Vert
E\right\Vert $.
\end{theorem}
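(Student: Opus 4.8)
The plan is to proceed by structural induction on the generalized weighted rational expression $E\in GRE(K,A)$, constructing at each step a weighted \emph{LDL} formula $\varphi_E$ together with, crucially, an auxiliary \emph{LDL} program $\rho_E$ whose ``test semantics'' $\left\Vert\left\langle\rho_E\right\rangle true\right\Vert$ equals $\left\Vert E\right\Vert$. Carrying both $\varphi_E$ and $\rho_E$ through the induction is what makes the Cauchy product and iteration cases go through, since in Definition \ref{def-sem} the operators $\rho_1\cdot\rho_2$ and $\rho^{\oplus}$ live on the program level, not the formula level. So the real statement to prove by induction is: for every $E\in GRE(K,A)$ there is a program $\rho_E$, built in linear time, with $\left\Vert\left\langle\rho_E\right\rangle true\right\Vert=\left\Vert E\right\Vert$; then one simply sets $\varphi_E=\left\langle\rho_E\right\rangle true$.

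First I would handle the base case $E=ka$ with $a\in A\cup\{\varepsilon\}$. For $a=\varepsilon$, the series $ka$ is $\widetilde 0$ except at the empty word where it is $k$, so I take $\rho_{k\varepsilon}=(k\otimes Last')?$ where $Last'$ is the \emph{LDL} formula $\bigwedge_{a'\in A}\lnot p_{a'}$ characterizing the empty word (note $w\models Last'$ iff $w=\varepsilon$), giving $\left\Vert\left\langle\rho_{k\varepsilon}\right\rangle true\right\Vert(w)=k\cdot\left\Vert Last'\right\Vert(w)$ as desired. For $a\in A$, the series $ka$ sends the one-letter word $a$ to $k$ and everything else to $0$; here I take $\rho_{ka}=p_a\cdot((k\otimes Last')?)$ (using the program concatenation $\cdot$, the propositional test $p_a$ consuming the first letter, then the test that what remains is $\varepsilon$ weighted by $k$), and check against the semantic clause for $\left\langle\rho_1\cdot\rho_2\right\rangle\varphi$ that the value at $w$ is $k$ iff $w=a$ and $0$ otherwise.

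Next the inductive cases. For $E_1+E_2$ take $\rho_{E_1+E_2}=\rho_{E_1}\oplus\rho_{E_2}$; the clause $\left\Vert\left\langle\rho_1\oplus\rho_2\right\rangle\varphi\right\Vert=\left\Vert\left\langle\rho_1\right\rangle\varphi\right\Vert+\left\Vert\left\langle\rho_2\right\rangle\varphi\right\Vert$ gives the sum of series immediately. For $E_1\cdot E_2$ take $\rho_{E_1\cdot E_2}=\rho_{E_1}\cdot\rho_{E_2}$; the Cauchy-product clause $\left\Vert\left\langle\rho_1\cdot\rho_2\right\rangle true\right\Vert(w)=\sum_{w=uv}\left\Vert\left\langle\rho_1\right\rangle true\right\Vert(u)\cdot\left\Vert\left\langle\rho_2\right\rangle true\right\Vert(v)$ is literally the definition of the Cauchy product of the two inductively obtained series. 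For the Hadamard product $E_1\odot E_2$ take $\rho_{E_1\odot E_2}=(\varphi_{E_1}\otimes\varphi_{E_2})?=(\left\langle\rho_{E_1}\right\rangle true\otimes\left\langle\rho_{E_2}\right\rangle true)?$, since a test program multiplies the two formula values at the same word. For the iteration $E^{+}$ (defined only when $\left\Vert E\right\Vert$ is proper) take $\rho_{E^{+}}=\rho_E^{\oplus}$; properness of $\left\Vert E\right\Vert=\left\Vert\left\langle\rho_E\right\rangle true\right\Vert$ is exactly the side condition required for the clause $\left\Vert\left\langle\rho^{\oplus}\right\rangle true\right\Vert(w)=\sum_{n\geq1}\left\Vert\left\langle\rho^{n}\right\rangle true\right\Vert(w)$ to be defined, and an easy secondary induction on $n$ (matching the inductive definitions $\rho^{n}=\rho^{n-1}\cdot\rho$ and $s^{n}=s\cdot s^{n-1}$ modulo the trivial reindexing) shows $\left\Vert\left\langle\rho_E^{n}\right\rangle true\right\Vert=\left\Vert E\right\Vert^{n}$, whence the sum equals $\left\Vert E\right\Vert^{+}$. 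In each case $\rho_E$ is obtained from $\rho_{E_1},\rho_{E_2}$ by adding a constant number of symbols, so the whole construction is linear in $|E|$.

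The only real subtlety—and the step I expect to need the most care—is the iteration case: one must verify that $\left\Vert\left\langle\rho^{n}\right\rangle true\right\Vert$ really coincides with the $n$-th Cauchy iterate $\left\Vert E\right\Vert^{n}$ for all $n\geq1$, reconciling the paper's convention $s^{1}=s^{0}\cdot s=\varepsilon\cdot s$ (so $s^{1}=s$ as a series, since $\varepsilon$ is the unit of the Cauchy product) with $\rho^{1}=\rho$ and the recursion $\rho^{n}=\rho^{n-1}\cdot\rho$, and then invoking properness so that only finitely many terms are nonzero at each $w$ and the infinite sum $\sum_{n\geq1}$ is well-defined even in a semiring without infinite sums. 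Everything else is a direct unfolding of the semantic clauses of Definition \ref{def-sem} against the definitions of the rational operations in Section 2.
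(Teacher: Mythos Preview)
Your overall strategy---structural induction, carrying a program $\rho_E$ with $\left\Vert\langle\rho_E\rangle true\right\Vert=\left\Vert E\right\Vert$---is essentially the paper's approach phrased dually: the paper builds a formula $\varphi_E$ and uses $\varphi_E?$ whenever a program is needed (e.g.\ $\varphi_{E_1\cdot E_2}=\langle\varphi_{E_1}?\cdot\varphi_{E_2}?\rangle true$ and $\varphi_{E_1^{+}}=\langle(\varphi_{E_1}?)^{\oplus}\rangle true$), whereas you build $\rho_E$ and recover $\varphi_E=\langle\rho_E\rangle true$. The inductive cases for $+$, $\cdot$, $\odot$, and ${}^{+}$ are correct and match the paper.

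There is, however, a genuine bug in your base case $E=ka$ for $a\in A$. You take $\rho_{ka}=p_a\cdot((k\otimes Last')?)$ with $Last'=\bigwedge_{a'\in A}\lnot p_{a'}$, relying on the intuition that the propositional program $p_a$ ``consumes the first letter.'' But by Definition~\ref{def-sem}, $\left\Vert\langle p_a\rangle true\right\Vert(u)=\left\Vert p_a\right\Vert(u)\cdot\left\Vert true\right\Vert(u_{\geq 1})=1$ for \emph{every} nonempty $u$ with $u(0)=a$, regardless of $|u|$. Hence in the Cauchy sum $\sum_{w=uv}\left\Vert\langle p_a\rangle true\right\Vert(u)\cdot\left\Vert\langle(k\otimes Last')?\rangle true\right\Vert(v)$, the only surviving split is $u=w$, $v=\varepsilon$, and it contributes $k$ whenever $w$ is nonempty with $w(0)=a$---not only when $w=a$. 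So your $\rho_{ka}$ represents $k\cdot 1_{aA^{\ast}}$, not $ka$. The fix is to pin down length~$1$ explicitly, as the paper does via the \emph{LDL} formula $Last=\langle true\rangle\bigwedge_{a'\in A}\lnot p_{a'}$: take for instance $\rho_{ka}=\bigl(\langle(k\otimes p_a)?\rangle Last\bigr)?$, or equivalently set $\varphi_{ka}=\langle(k\otimes p_a)?\rangle Last$ directly. With this correction your proof goes through and coincides with the paper's.
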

\begin{proof}
[Sketch]
We proceed by induction on the structure of generalized weighted
rational expressions in $GRE(K,A)$. For this, we define for every $E\in
GRE(K,A)$ the weighted \emph{LDL }formula $\varphi_{E}\in LDL(K,A)$ as follows.

\begin{itemize}
\item[-] If $E=k\varepsilon$ with $k\in K$, then $\varphi_{E}=k\otimes
\bigwedge\nolimits_{a\in A}\lnot p_{a}$.

\item[-] If $E=ka$ with $k\in K,a\in A$, then $\varphi_{E}=\left\langle \left(
k\otimes p_{a}\right)  ?\right\rangle Last$.

\item[-] If $E=E_{1}+E_{2}$, then $\varphi_{E}=\varphi_{E_{1}}\oplus
\varphi_{E_{2}}$.

\item[-] If $E=E_{1}\cdot E_{2}$, then $\varphi_{E}=\left\langle \varphi_{E_{1}%
}?\cdot\varphi_{E_{2}}?\right\rangle true$.

\item[-] If $E=E_{1}^{+}$, then $\varphi_{E}=\left\langle \left(  \varphi_{E_{1}%
}?\right)  ^{\oplus}\right\rangle true$.

\item[-] If $E=E_{1}\odot E_{2}$, then $\varphi_{E}=\left\langle \varphi_{E_{1}%
}?\right\rangle \varphi_{E_{2}}$. \hfill $\square$ \medskip
\end{itemize}

\end{proof}

The next theorem shows that also the converse result holds. More precisely, we
show that for every $\varphi\in LDL(K,A)$ we can construct a generalized
weighted rational expression $E_{\varphi}\in GRE(K,A)$ such that $\left\Vert
E_{\varphi}\right\Vert =\left\Vert \varphi\right\Vert $. For this, we first
translate every \emph{LDL} formula into a rational expression using Theorem
\ref{LDL_RE}. The complexity of an inductive translation would be non-elementary since for
every occurrence of a negation symbol we need an exponential complementation
construction. However, one can follow the translation of \cite{Gi:Li,Gi:Sy} with a doubly exponential construction. We shall need the following lemma.

\begin{lemma}
\label{reg_expr_0_1}Let $E$ be a rational expression over $A$\ and $L(E)$ the
language defined by $E$. Then, there is an $E^{\prime}\in RE(K,A)$ such that
$\left\Vert E^{\prime}\right\Vert (w)=1$ if $w\in L(E)$ and $\left\Vert
E^{\prime}\right\Vert (w)=0$ otherwise, for every $w\in A^{\ast}$.
\end{lemma}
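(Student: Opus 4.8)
The plan is to proceed by induction on the structure of the rational expression $E$, mirroring the grammar $E ::= ka \mid E+E \mid E\cdot E \mid E^{+}$, but tracking only the underlying language $L(E)$ and producing a companion expression $E' \in RE(K,A)$ whose semantics is the characteristic series $1_{L(E)}$. The subtlety is that a weighted rational expression only builds series from atoms $ka$, sums, Cauchy products and iteration; there is no Boolean complement or intersection available, so one cannot literally "copy" $E$ with weights $1$ and hope the arithmetic works out — the Cauchy product introduces counting of factorizations and the sum introduces addition of overlapping contributions. So I would first reduce to a normal form.

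First I would invoke Theorem~\ref{Kleene} together with Theorem~\ref{LDL_RE} (or directly the classical Kleene theorem in the Boolean setting): the language $L(E) \subseteq A^{\ast}$ is rational, hence recognizable by a finite (unweighted) automaton $\mathcal{A}$, which I may assume to be deterministic and complete by the usual subset construction. A deterministic automaton is exactly what kills the multiplicity problem: for a fixed word $w$ there is precisely one run, so when we re-encode $\mathcal{A}$ as a weighted automaton over $K$ by assigning weight $1$ to every transition present in $\mathcal{A}$, weight $1$ to the initial state and to final states and $0$ elsewhere, the behaviour of this weighted automaton sends $w$ to $1$ if $w\in L(E)$ and to $0$ otherwise — the sum over runs collapses to a single term because there is a unique run, and that run contributes the product $1\cdots 1 = 1$. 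Thus the characteristic series $1_{L(E)}$ is a recognizable series over $A$ and $K$.

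Then I would apply the (effective) direction of the Schützenberger/Kleene theorem, Theorem~\ref{Kleene}, which states that every recognizable series over $A$ and $K$ is rational: this yields a weighted rational expression $E' \in RE(K,A)$ with $\left\Vert E' \right\Vert = 1_{L(E)}$, i.e. $\left\Vert E' \right\Vert(w) = 1$ for $w \in L(E)$ and $0$ otherwise, for every $w \in A^{\ast}$, which is exactly the assertion of the lemma. One small technical point to verify is that the expression $E'$ produced from a weighted automaton is genuinely in $RE(K,A)$ according to the grammar $E ::= ka \mid E+E \mid E\cdot E \mid E^{+}$: the standard state-elimination construction produces expressions built from the atoms $ka$ (and the constant $1$, obtainable as $1\varepsilon$, which the grammar allows since $\varepsilon \in A \cup \{\varepsilon\}$) using sum, Cauchy product and $(\cdot)^{+}$, with the properness side-condition on iteration satisfied whenever one eliminates states consistently — exactly as in the proof of Theorem~\ref{Kleene} cited from \cite{Es:Ha,Sa:Han}. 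So no new machinery is needed.

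The main obstacle, and the reason the naive "replace each atom $a$ of $E$ by $1a$" does not work, is precisely the collapse of multiplicities; the way around it is not a syntactic manipulation of $E$ but the detour through a \emph{deterministic} recognizing automaton, after which the two directions of Theorem~\ref{Kleene} do all the work. An alternative, more hands-on route would be to argue directly by induction, but that forces one to carry a stronger invariant — e.g. that $E'$ is "unambiguous" in the sense that in every relevant Cauchy product the factorization witnessing membership is unique — which is essentially the statement that $L(E)$ admits an unambiguous rational expression, and proving that is no easier than passing through a deterministic automaton. Hence I would present the automaton-based argument as above, keeping the determinization and the citation to Theorem~\ref{Kleene} as the two load-bearing steps.
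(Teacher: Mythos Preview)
Your proposal is correct and follows essentially the same approach as the paper: pass to a deterministic automaton for $L(E)$, reinterpret it as a weighted automaton over $K$ with weights $0$ and $1$ so that the unique run yields the characteristic series $1_{L(E)}$, and then invoke Theorem~\ref{Kleene} to extract $E'\in RE(K,A)$. The paper's proof sketch is terser but identical in substance.
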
\smallskip
\begin{proof}
[Sketch] We consider a deterministic automaton for the rational expression $E$ and construct a weighted automaton over $A$ and $K$, with weights $0$ and $1$. \hfill $\square$
\end{proof}\medskip

\begin{theorem}
\label{LDL_to_RE} For every weighted LDL formula $\varphi\in LDL(K,A)$ we can
construct a generalized weighted rational expression $E_{\varphi}\in GRE(K,A)$
such that $\left\Vert E_{\varphi}\right\Vert =\left\Vert \varphi\right\Vert $.
\end{theorem}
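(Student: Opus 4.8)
The plan is to proceed by structural induction on the weighted \emph{LDL} formula $\varphi$, simultaneously with an auxiliary induction that handles the path expressions $\rho$ by translating each $\rho$ into a generalized rational expression realizing the series $\left\Vert\left\langle\rho\right\rangle true\right\Vert$. First I would treat the base cases: if $\varphi = k$, then $\left\Vert\varphi\right\Vert = \widetilde{k}$, which one obtains as $k\varepsilon + (k\varepsilon\cdot E_A^{+})$ where $E_A$ is a rational expression for all of $A$, using Lemma \ref{reg_expr_0_1} to turn the underlying language $A^{\ast}$ into a $0/1$-weighted expression and scaling. If $\varphi = \psi$ is an unweighted \emph{LDL} formula, then $L(\psi)$ is rational by Theorem \ref{LDL_RE} (translating via \cite{Gi:Li,Gi:Sy} in doubly exponential time to control complexity), and Lemma \ref{reg_expr_0_1} gives a rational expression $E_{\psi}\in RE(K,A)$ with $\left\Vert E_{\psi}\right\Vert = \left\Vert\psi\right\Vert$; in particular $E_{\psi}\in GRE(K,A)$.

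Next I would handle the inductive cases for $\varphi$. For $\varphi = \varphi_1\oplus\varphi_2$, set $E_{\varphi} = E_{\varphi_1} + E_{\varphi_2}$; for $\varphi = \varphi_1\otimes\varphi_2$, use the Hadamard product and set $E_{\varphi} = E_{\varphi_1}\odot E_{\varphi_2}$, whose semantics is $\left\Vert\varphi_1\right\Vert\odot\left\Vert\varphi_2\right\Vert$ by definition of $GRE$. The genuinely new work is the modal case $\varphi = \left\langle\rho\right\rangle\varphi_0$, which forces the parallel treatment of path expressions. I would prove by induction on $\rho$ that there is a generalized rational expression $E_{\langle\rho\rangle\varphi_0}$ with $\left\Vert E_{\langle\rho\rangle\varphi_0}\right\Vert = \left\Vert\left\langle\rho\right\rangle\varphi_0\right\Vert$ for every $\varphi_0$ already handled: for $\rho = \phi$ a propositional formula, $\left\langle\phi\right\rangle\varphi_0$ has semantics $\left\Vert\phi\right\Vert\odot(\text{shift of }\left\Vert\varphi_0\right\Vert)$, which I realize as $\bigl(\sum_{a\models\phi} a\cdot E_{\varphi_0}'\bigr)$ after a suitable rewriting, where $E_{\varphi_0}'$ accounts for reading $\varphi_0$ on the suffix; for $\rho = \varphi_1?$, the semantics is $\left\Vert\varphi_1\right\Vert\odot\left\Vert\varphi_0\right\Vert$, so $E = E_{\varphi_1}\odot E_{\varphi_0}$; for $\rho = \rho_1\oplus\rho_2$ use $+$; for $\rho = \rho_1\cdot\rho_2$ use the Cauchy product $E_{\langle\rho_1\rangle true}\cdot E_{\langle\rho_2\rangle\varphi_0}$; and for $\rho = \sigma^{\oplus}$, since the hypothesis guarantees $\left\Vert\left\langle\sigma\right\rangle true\right\Vert$ is proper, I realize $\left\langle\sigma^{\oplus}\right\rangle\varphi_0$ as $E_{\langle\sigma\rangle true}^{+}\cdot E_{\langle\sigma\rangle\varphi_0}$ together with the $n=1$ term, matching the definition $\sum_{n\geq 1}\left\Vert\left\langle\sigma^n\right\rangle\varphi_0\right\Vert$.

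The main obstacle, I expect, is bookkeeping around the "shift" that the $\left\langle\phi\right\rangle$ modality performs: the value $\left\Vert\left\langle\phi\right\rangle\varphi\right\Vert(w)$ depends on $\left\Vert\varphi\right\Vert(w_{\geq 1})$, so one cannot naively form a Hadamard product, and one must instead write the semantics as a sum over the first letter $a$ of expressions $a\cdot(\text{something for }w_{\geq 1})$ — but rational expressions built from $ka$ do not directly compose "on the suffix'' with an arbitrary $E_{\varphi}$, because $E_{\varphi}$ may have a nonzero constant coefficient and may assign weights to $\varepsilon$. The careful point is that $\left\langle\phi\right\rangle$ is actually definable through the path-expression machinery as $\left\langle\phi\cdot(\varphi?)\right\rangle true$ in spirit, so I would reduce the $\left\langle\phi\right\rangle\varphi$ case to the $\rho_1\cdot\rho_2$ case with $\rho_1$ consuming exactly one letter; concretely, $\left\Vert\left\langle\phi\right\rangle true\right\Vert = \sum_{a\models\phi} a$ is proper, and then the Cauchy-product clause of the path induction does the rest. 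Establishing that these rewritings are semantically faithful — in particular that properness of $\left\Vert\left\langle\rho\right\rangle true\right\Vert$ propagates correctly through the path-expression induction so that the iteration $E^{+}$ is always defined exactly when $\rho^{\oplus}$ is well-formed — is the delicate part; the rest is a routine matching of the inductive clauses of Definition \ref{def-sem} against the semantics clauses of $GRE(K,A)$.

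\begin{proof}
[Sketch]
We proceed by induction on the structure of $\varphi \in LDL(K,A)$, in parallel with an induction on path expressions $\rho$ producing, for each subformula $\varphi_0$ already treated, a generalized rational expression realizing $\left\Vert\left\langle\rho\right\rangle\varphi_0\right\Vert$.

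For $\varphi = k$ we use Lemma \ref{reg_expr_0_1} applied to a rational expression for $A^{\ast}$ and scale by $k$, obtaining $E_{\varphi}$ with $\left\Vert E_{\varphi}\right\Vert = \widetilde{k}$. For $\varphi = \psi$ an \emph{LDL} formula, Theorem \ref{LDL_RE} gives a rational expression for $L(\psi)$ and Lemma \ref{reg_expr_0_1} then yields $E_{\psi} \in RE(K,A) \subseteq GRE(K,A)$ with $\left\Vert E_{\psi}\right\Vert = \left\Vert\psi\right\Vert$. For $\varphi = \varphi_1 \oplus \varphi_2$ set $E_{\varphi} = E_{\varphi_1} + E_{\varphi_2}$, and for $\varphi = \varphi_1 \otimes \varphi_2$ set $E_{\varphi} = E_{\varphi_1} \odot E_{\varphi_2}$; semantic correctness is immediate from Definition \ref{def-sem} and the semantics of $GRE(K,A)$.

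For $\varphi = \left\langle\rho\right\rangle\varphi_0$ we argue by induction on $\rho$, showing more generally that for any already-treated $\varphi_0$ there is $E_{\langle\rho\rangle\varphi_0} \in GRE(K,A)$ with the correct semantics, and that $\left\Vert\left\langle\rho\right\rangle true\right\Vert$ is rational. If $\rho = \phi$, then $\left\Vert\left\langle\phi\right\rangle true\right\Vert = \sum_{a \in A,\, a \models \phi} a$ is a proper rational series, and $\left\langle\phi\right\rangle\varphi_0$ is handled as the Cauchy product of this with $E_{\varphi_0}$, matching $\left\Vert\left\langle\phi\right\rangle\varphi_0\right\Vert(w) = \left\Vert\phi\right\Vert(w)\cdot\left\Vert\varphi_0\right\Vert(w_{\geq 1})$. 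If $\rho = \varphi_1?$, set $E_{\langle\rho\rangle\varphi_0} = E_{\varphi_1} \odot E_{\varphi_0}$. If $\rho = \rho_1 \oplus \rho_2$, set $E_{\langle\rho\rangle\varphi_0} = E_{\langle\rho_1\rangle\varphi_0} + E_{\langle\rho_2\rangle\varphi_0}$. If $\rho = \rho_1 \cdot \rho_2$, set $E_{\langle\rho\rangle\varphi_0} = E_{\langle\rho_1\rangle true} \cdot E_{\langle\rho_2\rangle\varphi_0}$. If $\rho = \sigma^{\oplus}$, then by well-formedness $\left\Vert\left\langle\sigma\right\rangle true\right\Vert$ is proper, so $E_{\langle\sigma\rangle true}^{+}$ is defined, and $\left\Vert\left\langle\sigma^{\oplus}\right\rangle\varphi_0\right\Vert = \sum_{n\geq 1}\left\Vert\left\langle\sigma^n\right\rangle\varphi_0\right\Vert$ is realized by $\bigl(E_{\langle\sigma\rangle true}^{+} + \varepsilon\bigr) \cdot E_{\langle\sigma\rangle\varphi_0}$, the $\varepsilon$-summand accounting for $n=1$; here one checks by induction on $n$ that $\left\Vert\left\langle\sigma^n\right\rangle\varphi_0\right\Vert = \left\Vert\left\langle\sigma\right\rangle true\right\Vert^{n-1} \cdot \left\Vert\left\langle\sigma\right\rangle\varphi_0\right\Vert$. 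Finally $\left\Vert\left\langle\rho\right\rangle true\right\Vert$ is rational in each case since $\varphi_0 = true$ reduces $E_{\varphi_0}$ to $\left\Vert true\right\Vert = \widetilde{1}$. This completes the induction and the proof. \hfill $\square$
\end{proof}
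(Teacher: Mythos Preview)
Your proof is correct and follows essentially the same inductive scheme as the paper's: same base cases (Theorem~\ref{LDL_RE} plus Lemma~\ref{reg_expr_0_1} for Boolean $\psi$, and the expression $k\varepsilon + k\varepsilon\cdot(1A)^{+}$ for constants), same handling of $\oplus,\otimes$ via $+,\odot$, same Cauchy-product treatment of $\rho_1\cdot\rho_2$, and your $(E_{\langle\sigma\rangle true}^{+}+\varepsilon)\cdot E_{\langle\sigma\rangle\varphi_0}$ is exactly the paper's $E_1^{+}\cdot E_2 + E_2$ rewritten. The only visible difference is cosmetic: for $\langle\phi\rangle\varphi_0$ the paper sets $E_\varphi = E_\phi \odot (1A\cdot E_{\varphi_0})$, whereas you use the Cauchy product $\bigl(\sum_{a\models\phi} a\bigr)\cdot E_{\varphi_0}$; both yield the same series.

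One genuine slip to fix: the displayed equation $\left\Vert\langle\phi\rangle true\right\Vert = \sum_{a\models\phi} a$ is \emph{false} as an equality of series, since $\left\Vert\langle\phi\rangle true\right\Vert(w)=\left\Vert\phi\right\Vert(w)$ is nonzero on arbitrarily long $w$ with $w(0)\models\phi$, not just on single letters. Your construction is unaffected because, as your final sentence indicates, you actually take $E_{\langle\phi\rangle true} = \bigl(\sum_{a\models\phi} a\bigr)\cdot E_{true}$ (instantiating $\varphi_0 = true$ in the general recipe), and that \emph{is} correct. But if one literally used $\sum_{a\models\phi} a$ as $E_{\langle\phi\rangle true}$ in the $\rho_1\cdot\rho_2$ clause, the result would be wrong, so the sentence should be rephrased.
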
 \smallskip
\begin{proof}
We proceed by induction on the structure of $LDL(K,A)$ formulas
$\varphi$. If $\varphi=\psi$ is an \emph{LDL }formula, then by Theorem
\ref{LDL_RE} it is expressively equivalent to a
rational expression $E_{\psi}$. Then, by Lemma \ref{reg_expr_0_1}, we can
assume that $E_{\psi}$ is a weighted rational expression in $RE(K,A)$, hence in
$GRE(K,A)$, whose semantics gets values $0$ and $1$ and we get $\left\Vert
E_{\psi}\right\Vert =\left\Vert \psi\right\Vert $. Next, assume that
$\varphi=k\in K$. It is straightforward that the generalized weighted rational
expression $E_{\varphi}=k\varepsilon+k\varepsilon\cdot\left(  1A\right)  ^{+}%
$, where $1A=\sum\nolimits_{a\in A}a$, satisfies our claim. If $\varphi
=\varphi_{1}\oplus\varphi_{2}$ or $\varphi=\varphi_{1}\otimes\varphi_{2}$,
then we get our result by the induction hypothesis and the closure of
generalized weighted rational expressions under sum and Hadamard product,
respectively. Now assume that $\varphi=\left\langle \phi\right\rangle
\varphi^{\prime}$. By the induction hypothesis there are $E_{\phi}%
,E_{\varphi^{\prime}}\in GRE(K,A)$ such that $\left\Vert E_{\phi}\right\Vert
=\left\Vert \phi\right\Vert $ and $\left\Vert E_{\varphi^{\prime}}\right\Vert
=\left\Vert \varphi^{\prime}\right\Vert $. We let $E_{\varphi}=E_{\phi}%
\odot(1A\cdot E_{\varphi^{\prime}})$ and we get
\begin{align*}
\left\Vert E_{\varphi}\right\Vert (w)  &  =\left\Vert E_{\phi}\right\Vert
(w)\cdot\left\Vert 1A\cdot E_{\varphi^{\prime}}\right\Vert (w)\\
&  =\left\Vert E_{\phi}\right\Vert (w)\cdot\left\Vert 1A\right\Vert
(w(0))\cdot\left\Vert E_{\varphi^{\prime}}\right\Vert (w_{\geq1})\\
&  =\left\Vert E_{\phi}\right\Vert (w)\cdot\left\Vert E_{\varphi^{\prime}%
}\right\Vert (w_{\geq1})\\
&  =\left\Vert \phi\right\Vert (w)\cdot\left\Vert \varphi^{\prime}\right\Vert
(w_{\geq1})\\
&  =\left\Vert \left\langle \phi\right\rangle \varphi^{\prime}\right\Vert (w)
\end{align*}
for every $w\in A^{\ast}$, hence $\left\Vert E_{\varphi}\right\Vert
=\left\Vert \varphi\right\Vert $. \newline If $\varphi=\left\langle
\varphi_{1}?\right\rangle \varphi_{2}$ or $\varphi=\left\langle \rho_{1}%
\oplus\rho_{2}\right\rangle \varphi^{\prime}$ or $\varphi=\left\langle
\rho_{1}\cdot\rho_{2}\right\rangle \varphi^{\prime}$, then our claim holds
true by the induction hypothesis and the closure of the class $GRE(K,A)$ under
Hadamard product, sum, and Cauchy product, respectively. Finally, let
$\varphi=\left\langle \rho^{\oplus}\right\rangle \varphi^{\prime}$ and assume
that $\left\Vert \varphi \right \Vert$ is defined and there are generalized weighted rational expressions $E_{1},E_{2}$ such
that $\left\Vert E_{1}\right\Vert =\left\Vert \left\langle \rho\right\rangle
true\right\Vert $,  which is proper, and $\left\Vert E_{2}\right\Vert =\left\Vert \left\langle
\rho\right\rangle \varphi^{\prime}\right\Vert $. Then, we let $E_{\varphi
}=E_{1}^{+}\cdot E_{2}+E_{2}$ and for every $w\in A^{\ast}$ we get
\begin{align*}
\left\Vert E_{\varphi}\right\Vert (w)  &  =\left\Vert E_{1}^{+}\cdot E_{2}%
+E_{2}\right\Vert (w)\\
&  =\underset{w=uv,u\neq\varepsilon}{\sum}\left(  \left\Vert E_{1}%
^{+}\right\Vert (u)\cdot\left\Vert E_{2}\right\Vert (v)\right)  +\left\Vert
E_{2}\right\Vert (w)\\
&  =\underset{w=uv,u\neq\varepsilon}{\sum}\left(  \left\Vert E_{1}\right\Vert
^{+}(u)\cdot\left\Vert E_{2}\right\Vert (v)\right)  +\left\Vert E_{2}%
\right\Vert (w)\\
&  =\underset{w=uv,u\neq\varepsilon}{\sum} \left(\left\Vert \left\langle
\rho\right\rangle true\right\Vert ^{+}(u)\cdot\left\Vert \left\langle
\rho\right\rangle \varphi^{\prime}\right\Vert (v)\right)+\left\Vert \left\langle
\rho\right\rangle \varphi^{\prime}\right\Vert (w)\\
&  =\underset{w=uv,u\neq\varepsilon}{\sum}\underset{m \geq 1 }{\sum}\left(  \left\Vert \left\langle \rho\right\rangle
true\right\Vert ^{m}(u)\cdot\left\Vert \left\langle \rho\right\rangle
\varphi^{\prime}\right\Vert (v)\right)  +\left\Vert \left\langle
\rho\right\rangle \varphi^{\prime}\right\Vert (w)\\
&  =\underset{n \geq 2 }{\sum}\left\Vert
\left\langle \rho^{n}\right\rangle \varphi^{\prime}\right\Vert (w)+\left\Vert
\left\langle \rho\right\rangle \varphi^{\prime}\right\Vert (w)\\
&  =\underset{n\geq 1 }{\sum}\left\Vert
\left\langle \rho^{n}\right\rangle \varphi^{\prime}\right\Vert (w)\\
&  =\left\Vert \left\langle \rho^{\oplus}\right\rangle \varphi^{\prime
}\right\Vert (w),
\end{align*}
i.e., $\left\Vert E_{\varphi}\right\Vert =\left\Vert \left\langle \rho
^{\oplus}\right\rangle \varphi^{\prime}\right\Vert $ which concludes our
proof. \hfill $\square$ \medskip
\end{proof}

By Theorems \ref{RE_to_LDL} and \ref{LDL_to_RE} we get our first
main result.

\begin{theorem}
\label{LDL_eq_RE}Let $K$ be a semiring and $A$ an alphabet. Then a series
$s\in K\left\langle \left\langle A^{\ast}\right\rangle \right\rangle $ is
LDL-definable iff it is g-rational.
\end{theorem}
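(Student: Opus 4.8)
The plan is to assemble Theorem~\ref{LDL_eq_RE} directly from the two translation theorems already established. Recall that a series $s$ is \emph{LDL}-definable if $s = \left\Vert \varphi\right\Vert$ for some $\varphi \in LDL(K,A)$, and $s$ is g-rational if $s = \left\Vert E\right\Vert$ for some $E \in GRE(K,A)$. The claim is that these two classes coincide, and each inclusion corresponds to one of the preceding theorems.

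First I would argue the implication ``g-rational $\Rightarrow$ \emph{LDL}-definable''. Suppose $s$ is g-rational, so there is $E \in GRE(K,A)$ with $\left\Vert E\right\Vert = s$. By Theorem~\ref{RE_to_LDL} there is a weighted \emph{LDL} formula $\varphi_E \in LDL(K,A)$ with $\left\Vert \varphi_E\right\Vert = \left\Vert E\right\Vert = s$, hence $s$ is \emph{LDL}-definable. (If one wishes, one may note additionally that the construction is effective and runs in linear time.) Conversely, for ``\emph{LDL}-definable $\Rightarrow$ g-rational'', suppose $s = \left\Vert \varphi\right\Vert$ for some $\varphi \in LDL(K,A)$. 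By Theorem~\ref{LDL_to_RE} there is $E_\varphi \in GRE(K,A)$ with $\left\Vert E_\varphi\right\Vert = \left\Vert \varphi\right\Vert = s$, so $s$ is g-rational. Combining the two inclusions yields the equivalence.

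There is essentially no obstacle here: the theorem is a direct corollary of Theorems~\ref{RE_to_LDL} and~\ref{LDL_to_RE}, and the only mild point to keep in mind is the well-definedness caveat attached to the iteration operators. In Theorem~\ref{RE_to_LDL} the formula $\varphi_{E_1^+}$ uses a test $(\varphi_{E_1}?)^{\oplus}$, whose semantics requires $\left\Vert \left\langle \varphi_{E_1}? \right\rangle true\right\Vert = \left\Vert \varphi_{E_1}\right\Vert$ to be proper; this matches exactly the condition under which $\left\Vert E_1^+\right\Vert$ is defined, so the two sides are defined in the same cases. Symmetrically, in Theorem~\ref{LDL_to_RE} the formula $\left\langle \rho^{\oplus}\right\rangle \varphi'$ is interpreted only when $\left\Vert \left\langle \rho\right\rangle true\right\Vert$ is proper, which is precisely what makes $E_1^+$ legal in the constructed expression $E_\varphi = E_1^+ \cdot E_2 + E_2$. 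Thus the correspondence between the domains of definition of the two semantics is preserved under both translations, and the stated equivalence follows with no further work.
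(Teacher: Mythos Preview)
Your proposal is correct and matches the paper's own approach: the paper states Theorem~\ref{LDL_eq_RE} as an immediate consequence of Theorems~\ref{RE_to_LDL} and~\ref{LDL_to_RE}, exactly as you do. Your additional remarks about the properness caveat for the iteration operators are sound but go slightly beyond what the paper records, since those well-definedness issues are already absorbed into the proofs of the two cited theorems.
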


By Theorem \ref{LDL_eq_RE} and the discussion following Theorem \ref{Kleene}, we
immediately obtain the following consequence.

\begin{corollary}\label{LDL_eq_RE_com}
\label{LDL_to_REC} Let $K$ be a commutative semiring and $A$ an alphabet. A
series $s\in K\left\langle \left\langle A^{\ast}\right\rangle \right\rangle $
is LDL-definable iff it is recognizable.
\end{corollary}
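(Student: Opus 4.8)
The plan is to reduce the statement to results already in hand: Theorem~\ref{LDL_eq_RE} together with the characterization of g-rational series over commutative semirings recorded after Theorem~\ref{Kleene}. First I would apply Theorem~\ref{LDL_eq_RE}, which holds over an arbitrary semiring $K$: a series $s\in K\left\langle \left\langle A^{\ast}\right\rangle \right\rangle$ is LDL-definable iff it is g-rational. This step already eliminates all reference to the logic and leaves a purely rational-expression statement, and it uses no commutativity whatsoever.

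Next I would invoke the remark following Theorem~\ref{Kleene}: when $K$ is commutative, the class of recognizable series over $A$ and $K$ is closed under the Hadamard product $\odot$ (cf.\ \cite{Sc:Th,Be:Ra,Dr:Han}). Combining this with Sch\"{u}tzenberger's Theorem~\ref{Kleene} (rational $=$ recognizable over any semiring) and the classical closure of recognizable series under sum, Cauchy product, and iteration of proper series, a routine structural induction on generalized weighted rational expressions in $GRE(K,A)$ shows that every g-rational series over a commutative $K$ is recognizable. The converse is immediate, since every recognizable series is rational and hence, trivially, g-rational. Thus over a commutative semiring, g-rational $=$ recognizable.

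Chaining the two equivalences gives the corollary. The only place where commutativity is genuinely used — and hence the only conceivable obstacle — is the closure of recognizable series under the Hadamard product, which is false for arbitrary semirings; here it is furnished by the cited classical results. Every other ingredient is a direct appeal to a theorem already established in the paper, so the proof is essentially a two-line chain of ``iff''s.
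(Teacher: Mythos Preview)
Your proposal is correct and follows essentially the same approach as the paper: the paper derives the corollary in one line from Theorem~\ref{LDL_eq_RE} together with the discussion after Theorem~\ref{Kleene} (that for commutative $K$, g-rational $=$ recognizable because recognizable series are closed under Hadamard product). Your write-up merely unpacks that discussion in slightly more detail.
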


The next proposition describes a doubly exponential translation of a weighted \emph{LDL} formula to an expressively equivalent weighted automaton. 

\begin{proposition}
\label{LDL_to_aut}
Let $K$ be a commutative semiring and $A$ an alphabet. For every weighted LDL formula $\varphi$ we can construct, in doubly exponential time, a weighted automaton $\mathcal{A}_{\varphi}$ such that $\left \Vert \mathcal{A}_{\varphi} \right \Vert = \left \Vert \varphi \right \Vert$.  
\end{proposition}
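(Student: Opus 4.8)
The plan is to combine the translations established so far with the classical doubly exponential construction for (unweighted) \emph{LDL}. First I would invoke Theorem \ref{LDL_to_RE} to convert the given weighted \emph{LDL} formula $\varphi$ into a generalized weighted rational expression $E_\varphi \in GRE(K,A)$ with $\|E_\varphi\| = \|\varphi\|$. The point to check here is that this conversion can be carried out within doubly exponential time: the only non-elementary-looking step in the proof of Theorem \ref{LDL_to_RE} is the base case where an \emph{LDL} subformula $\psi$ is translated to a rational expression via Theorem \ref{LDL_RE}, and a naive inductive translation would incur an exponential blow-up for each negation. However, as already noted in the paragraph preceding Lemma \ref{reg_expr_0_1}, one follows the automata-theoretic construction of \cite{Gi:Li,Gi:Sy} which produces, for an \emph{LDL} formula, an equivalent automaton (hence rational expression, via Lemma \ref{reg_expr_0_1}) of at most doubly exponential size. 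All remaining inductive steps in Theorem \ref{LDL_to_RE} (for $k$, $\oplus$, $\otimes$, $\langle\phi\rangle$, $\langle\varphi_1?\rangle$, $\langle\rho_1\oplus\rho_2\rangle$, $\langle\rho_1\cdot\rho_2\rangle$, $\langle\rho^\oplus\rangle$) only apply the rational operations $+$, $\odot$, $\cdot$, $(\cdot)^+$ a linear number of times, so they contribute only a polynomial overhead; thus $E_\varphi$ has doubly exponential size and is computable in doubly exponential time.

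Next I would pass from the generalized weighted rational expression $E_\varphi$ to a weighted automaton. Here commutativity of $K$ is used: by Theorem \ref{Kleene} (Sch\"{u}tzenberger) a weighted rational expression translates into an equivalent weighted automaton, and by the discussion following Theorem \ref{Kleene}, over a commutative semiring the class of recognizable series is closed under the Hadamard product, so the extra operation $\odot$ appearing in $E_\varphi$ can also be realized on the automaton side. Concretely, one processes $E_\varphi$ bottom-up: atomic expressions $ka$ give two-state automata, and the operations $+$, $\cdot$, $(\cdot)^+$, $\odot$ are handled by the standard (respectively, product) constructions for weighted automata over a commutative semiring. Each such construction increases the number of states by at most a constant factor (for $+$, $\cdot$, $(\cdot)^+$) or a quadratic factor (for $\odot$, which forms a product automaton), and the number of operations in $E_\varphi$ is at most polynomial in $|E_\varphi|$, hence doubly exponential in $|\varphi|$. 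Therefore the resulting weighted automaton $\mathcal{A}_\varphi$ has at most doubly exponentially many states and satisfies $\|\mathcal{A}_\varphi\| = \|E_\varphi\| = \|\varphi\|$, and the whole construction runs in doubly exponential time.

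The main obstacle is the bookkeeping of the complexity bound, specifically making precise that the translation of the unweighted \emph{LDL} parts stays within a single doubly exponential, rather than iterating exponentials at each nested negation; this is exactly the content borrowed from \cite{Gi:Li,Gi:Sy}, where an \emph{LDL} formula is translated to an alternating automaton of linear size and then to a nondeterministic automaton of exponential size, yielding a doubly exponential deterministic automaton, whose size one must then verify is not further blown up by the surrounding weighted operations. Once this is granted, the remaining steps — interleaving the weighted rational operations with the classical construction and tracking that each adds at most a polynomial (in the size of its operands) overhead — are routine, and one obtains the claimed doubly exponential weighted automaton $\mathcal{A}_\varphi$. \hfill $\square$
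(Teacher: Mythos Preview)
Your overall strategy is right and matches the paper's: handle unweighted \emph{LDL} subformulas via the doubly exponential automaton construction of \cite{Gi:Li,Gi:Sy}, and treat the remaining weighted connectives by the standard closure constructions on weighted automata (disjoint union, product automaton, concatenation, iteration). The paper, however, does this by \emph{direct} structural induction on $\varphi$, building weighted automata at every stage, whereas you first materialize the generalized rational expression $E_\varphi$ from Theorem~\ref{LDL_to_RE} and only then translate it to an automaton.

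This detour creates a genuine gap in your complexity bookkeeping. For an \emph{LDL} subformula $\psi$, the route you describe is $\psi \to$ DFA (doubly exponential, by \cite{Gi:Li,Gi:Sy}) $\to$ weighted rational expression $E_\psi$ via Lemma~\ref{reg_expr_0_1}. But the last step --- converting an automaton back into a rational expression --- is in general exponential in the number of states, so $E_\psi$ may be \emph{triply} exponential in $|\psi|$, and hence so may $E_\varphi$. Your assertion that ``$E_\varphi$ has doubly exponential size'' is therefore unjustified, and the subsequent bottom-up automaton construction from $E_\varphi$ inherits the extra exponential. The fix is exactly what the paper does: do not build $E_\varphi$ at all; at the \emph{LDL} base case, take the deterministic automaton from \cite{Gi:Li,Gi:Sy} directly as a weighted automaton with weights $0$ and $1$, and then apply the automaton constructions you list in your second paragraph inductively along the structure of $\varphi$ itself (not of $E_\varphi$). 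The number of weighted operations is then linear in $|\varphi|$, each is polynomial in its operands, and the doubly exponential bound follows.
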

\begin{proof}
If $\varphi$ is an \emph{LDL} formula, then by \cite{Gi:Li,Gi:Sy} we get a deterministic finite automaton accepting the language of $\varphi$ which trivially can be considered as a weighted automaton with weights $0$ and $1$. Then, by applying structural induction on $\varphi$ we prove our claim by well-known constructions on weighted automata (cf. \cite{Dr:Au}). More precisely, for the closure under sum we take the disjoint union of two weighted automata and for Hadamard product the product automaton. For the closure under Cauchy product we firstly construct the corresponding normalized weighted automata with one initial and final state respectively, and then identify the final state of the first automaton with the initial state of the second automaton. Finally for the plus-iteration, we get firstly the normalized weighted automaton and extend it with a copy of it. Then, we identify the final state of the original automaton with the copy states corresponding to the initial state and final state. The new automaton has the same initial state and the merging one as its final state. Since the translation of an \emph{LDL} formula to a deterministic finite automaton is doubly exponential \cite{Gi:Li,Gi:Sy} and the aforementioned constructions on weighted automata are polynomial, we obtain a doubly exponential translation of weighted \emph{LDL} formulas to weighted automata. \hfill $\square$ \medskip
\end{proof}

The construction of the weighted automaton, as described in the above proposition, is not possible for any semiring, since, as is known \cite{Be:Ra}, there are non-commutative semirings $K$ and g-rational series $s \in K \left \langle \left \langle A^* \right \rangle \right \rangle$ which are not recognizable. On the other hand, it is well-known \cite{Dr:Han} that the equivalence of weighted automata is decidable whenever the weight structure is a computable field. More interestingly the complexity of checking the equivalence is cubic. Therefore, we get the third main result of our paper.

\begin{theorem}
Let $K$ be a computable field and $A$ an alphabet. Then, for every $\varphi, \varphi ' \in LDL(K,A)$ the equality $\left \Vert \varphi \right \Vert = \left \Vert \varphi ' \right \Vert$ is decidable in doubly exponential time.
\end{theorem}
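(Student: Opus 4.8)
The plan is to combine the doubly exponential translation of Proposition~\ref{LDL_to_aut} with the known cubic-time decision procedure for equivalence of weighted automata over a computable field. First I would take the two given formulas $\varphi, \varphi' \in LDL(K,A)$ and apply Proposition~\ref{LDL_to_aut} to each, obtaining weighted automata $\mathcal{A}_\varphi$ and $\mathcal{A}_{\varphi'}$ over $K$ and $A$ with $\left\Vert \mathcal{A}_\varphi \right\Vert = \left\Vert \varphi \right\Vert$ and $\left\Vert \mathcal{A}_{\varphi'} \right\Vert = \left\Vert \varphi' \right\Vert$; each construction runs in doubly exponential time, so in particular the sizes of $\mathcal{A}_\varphi$ and $\mathcal{A}_{\varphi'}$ are at most doubly exponential in the sizes of $\varphi$ and $\varphi'$. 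Note that a computable field is in particular a commutative semiring, so Proposition~\ref{LDL_to_aut} applies; one should also observe that the automata produced have entries that are actual field elements (indeed only $0$ and $1$ arise from the \emph{LDL}-part, and the scalars occurring syntactically in $\varphi$ for the weighted part), so they are bona fide weighted automata over $K$.

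Next I would invoke the cited result (\cite{Dr:Han}) that equivalence of weighted automata over a computable field is decidable, with the refinement that the check runs in time cubic in the size of the automata. Applying this to $\mathcal{A}_\varphi$ and $\mathcal{A}_{\varphi'}$ decides whether $\left\Vert \mathcal{A}_\varphi \right\Vert = \left\Vert \mathcal{A}_{\varphi'} \right\Vert$, which by the semantic equalities above is exactly $\left\Vert \varphi \right\Vert = \left\Vert \varphi' \right\Vert$. For the complexity bound, the cubic equivalence test applied to automata of doubly exponential size still runs in doubly exponential time, since the cube of a doubly exponential function is again doubly exponential; composing this with the doubly exponential translation step keeps the overall procedure in doubly exponential time.

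The only genuine subtlety — and the step I would flag as the main point to get right rather than a deep obstacle — is the bookkeeping of complexity under composition: one must check that the cubic blow-up of the equivalence algorithm, when stacked on top of the doubly exponential blow-up of the $\varphi \mapsto \mathcal{A}_\varphi$ translation, does not escape the doubly exponential class, and that the arithmetic operations in $K$ performed during the equivalence test (Gaussian elimination over the field generated by the finitely many scalars appearing in $\varphi, \varphi'$) are effective because $K$ is computable. Both are routine: the class of doubly exponential time bounds is closed under polynomial composition, and computability of $K$ guarantees the field operations used in the linear-algebra based equivalence check are decidable. Assembling these observations yields the claimed decidability in doubly exponential time, in parallel with the realizability result for \emph{LDL} of \cite{Gi:Sy}. \hfill $\square$
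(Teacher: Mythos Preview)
Your proposal is correct and follows exactly the argument the paper gives: translate each formula to a weighted automaton in doubly exponential time via Proposition~\ref{LDL_to_aut}, then apply the cubic-time equivalence test for weighted automata over computable fields from \cite{Dr:Han}, observing that the composition stays doubly exponential. The paper does not spell out the complexity bookkeeping you flag, but its proof is otherwise identical in structure and content.
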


\begin{corollary}
Let $K$ be a computable field, $A$ an alphabet, and $k \in K$. Then, for every $\varphi \in LDL(K,A)$ the equality $\left \Vert \varphi \right \Vert = \tilde{k}$ is decidable in doubly exponential time.
\end{corollary}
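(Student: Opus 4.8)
The plan is to reduce the statement immediately to the preceding theorem. The key observation is that the constant series $\widetilde{k}$ is itself \emph{LDL}-definable by a formula of constant size: taking $\varphi' := k \in LDL(K,A)$, Definition~\ref{def-sem} gives $\left\Vert k\right\Vert(w) = k$ for every $w \in A^{\ast}$, that is, $\left\Vert \varphi'\right\Vert = \widetilde{k}$. Thus the question whether $\left\Vert \varphi\right\Vert = \widetilde{k}$ is precisely the instance $\left\Vert \varphi\right\Vert = \left\Vert \varphi'\right\Vert$ of the equivalence problem for weighted \emph{LDL} formulas over the computable field $K$.

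Concretely, first I would set $\varphi' := k$ and note $\left\Vert \varphi'\right\Vert = \widetilde{k}$ directly from the semantics of the atomic formula $k$. Then I would invoke the preceding theorem with this fixed $\varphi'$, which yields that $\left\Vert \varphi\right\Vert = \widetilde{k}$ is decidable in doubly exponential time; since $\varphi'$ has constant size, it contributes nothing to the asymptotic cost. Alternatively, one may argue directly at the automaton level: by Proposition~\ref{LDL_to_aut} construct in doubly exponential time a weighted automaton $\mathcal{A}_\varphi$ with $\left\Vert \mathcal{A}_\varphi\right\Vert = \left\Vert \varphi\right\Vert$, build in constant time a weighted automaton for $\widetilde{k}$ (e.g.\ the one corresponding to the g-rational expression $k\varepsilon + k\varepsilon\cdot(1A)^{+}$ used in the proof of Theorem~\ref{LDL_to_RE}, with $1A = \sum\nolimits_{a\in A}a$), and then run the cubic-time equivalence test for weighted automata over a computable field \cite{Dr:Han}; the dominating step is the first, so the overall running time stays doubly exponential.

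There is essentially no obstacle here. The one point that must be checked is that $\widetilde{k}$ genuinely lies in the range of the semantics map on $LDL(K,A)$, and this is immediate from the semantics clause $\left\Vert k\right\Vert(w) = k$. Hence either of the two routes above completes the argument, and the bulk of the work has already been done in the preceding results.
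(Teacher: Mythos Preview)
Your proposal is correct and matches the paper's intent: the corollary is stated without proof in the paper, and the evident argument is precisely your reduction to the preceding theorem via the observation that $\left\Vert k\right\Vert = \widetilde{k}$ for the atomic formula $k\in LDL(K,A)$. Your alternative automaton-level route is also fine but unnecessary here.
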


\begin{remark}
If $K$ is an idempotent commutative semiring, then for every weighted LDL formula $\varphi$ we can construct a weighted automaton $\mathcal{A}_{\varphi}$ such that $\left \Vert \mathcal{A}_{\varphi} \right \Vert = \left \Vert \varphi \right \Vert$ in exponential time. Indeed, if $\varphi$ is an LDL formula, then by \cite{Gi:Li,Gi:Sy} in exponential time we get a nondeterministic finite automaton accepting the language of $\varphi$, which, since $K$ is idempotent, can be considered as a weighted automaton with weights $0$ and $1$. Then proceed as before. In particular, if $K$ is a bounded distributive lattice, the equivalence of two weighted automata over $A$ and $K$ and hence of two weighted $LDL(K,A)$ formulas is again decidable \cite{Ra:Fu}.   
\end{remark}

\section{Weighted linear dynamic logic on infinite words}

In this section we interpret weighted \emph{LDL} formulas over infinite words. For this, we need our semiring to be equipped with infinite sums and
products. More precisely, we assume that the semiring $K$ is equipped, for
every index set $I$, with an infinitary sum operation $\sum_{I}:K^{I}\rightarrow
K$ such that for every family $(k_{i}\mid i\in I)$ of elements of $K$ and
$k\in K$ we have%
\begin{align*}
&  \displaystyle\sum_{i\in\emptyset}k_{i}=0,\hspace{1em}\sum_{i\in\{j\}}%
k_{i}=k_{j},\hspace{1em}\sum_{i\in\{j,l\}}k_{i}=k_{j}+k_{l}\mbox{ for
}j\neq l,\\
&  \displaystyle\sum_{j\in J}\Bigl(\sum_{i\in I_{j}}k_{i}\Bigr)=\sum_{i\in
I}k_{i}\mbox{, if $\bigcup_{j\in J}I_j=I$ and $I_j\cap
I_{j'}=\emptyset$ for $j\neq j'$,}\\
&  \displaystyle\sum_{i\in I}(k\cdot k_{i})=k\cdot\Bigl(\sum_{i\in I}%
k_{i}\Bigr),\hspace{1em}\sum_{i\in I}(k_{i}\cdot k)=\Bigl(\sum_{i\in I}%
k_{i}\Bigr)\cdot k.
\end{align*}
Then the semiring $K$ together with the operations $\sum_{I}$\ is called
\emph{complete} \ \cite{Ei:Au,Ku:Se}.

A complete semiring is said to be \emph{totally complete\/ }\cite{Es:On}%
\emph{, }if it is endowed with a countably infinite product operation
satisfying for every sequence $(k_{i}\mid i\geq0)$ of elements of $K$ the
subsequent conditions:%
$$
\prod_{i\geq0}1=1,\hspace{1em}\prod_{i\geq0}k_{i}=\prod_{i\geq0}k_{i}^{\prime
},\hspace{1em}
k_{0}\cdot\prod_{i\geq0}k_{i+1}=\prod_{i\geq0}k_{i},\hspace{1em}\prod_{j\geq
1}\sum_{i\in I_{j}}k_{i}=\sum_{(i_{1},i_{2},\ldots)\in I_{1}\times I_{2}%
\times\ldots}\prod_{j\geq1}k_{i_{j}},
$$
where in the second equation $k_{0}^{\prime}=k_{0}\cdot\ldots\cdot k_{n_{1}%
},k_{1}^{\prime}=k_{n_{1}+1}\cdot\ldots\cdot k_{n_{2}},\ldots$ for any
increasing sequence $0<n_{1}<n_{2}<\ldots,$ and in the last equation
$I_{1},I_{2},\dots$ are arbitrary index sets.

Furthermore, we will call a totally complete semiring $K$ \emph{totally
commutative complete }if it satisfies the equation:%
\[
\prod_{i\geq0}\left(  k_{i}\cdot k_{i}^{\prime}\right)  =\left(
\underset{i\geq0}{\prod}k_{i}\right)  \cdot\left(  \prod_{i\geq0}k_{i}%
^{\prime}\right)  .
\]
Obviously a totally commutative complete semiring is commutative. We refer the reader to \cite{Dr:Han,Ei:Au,Ku:Se} for examples of complete semirings. Throughout this section we assume $K$ to be a totally complete semiring. An \emph{infinitary series} (or simply \emph{series}) over $A^{\omega}$ and $K$ is a mapping $s:A^{\omega} \rightarrow K$. We denote by $K\left\langle \left\langle A^{\omega} \right\rangle
\right\rangle $ the class of all series over $A^{\omega}$ and $K$.  The sum, the products with scalars, and the Hadamard product of series in $ K\left\langle \left\langle
A^{\omega}\right\rangle \right\rangle $ are defined elementwise as for series on finite words. The structure $\left(  K\left\langle \left\langle A^{\omega}\right\rangle \right\rangle
,+,\odot,\widetilde{0},\widetilde{1}\right)  $ of infinitary series over $A$ and $K$ is a totally complete semiring.  Next let $s\in K\left\langle \left\langle
A^{\ast}\right\rangle \right\rangle $ and $r\in K\left\langle \left\langle
A^{\omega}\right\rangle \right\rangle $). The \emph{Cauchy product} 
$s\cdot r\in K\left\langle \left\langle A^{\omega}\right\rangle \right\rangle
$) is determined by $(s\cdot
r)(w)=\sum\nolimits_{w=uv,u\in A^{\ast}}s(u)r(v)$ for every $w\in A^{\omega}$. Finally, the
$\omega$\emph{-iteration} $s^{\omega}\in K\left\langle \left\langle A^{\omega
}\right\rangle \right\rangle $ \emph{of a proper series }$s\in K\left\langle
\left\langle A^{\ast}\right\rangle \right\rangle $ is defined by $s^{\omega
}(w)=\sum\nolimits_{w=w_{0}w_{1}\ldots}\prod\nolimits_{i\geq0}s(w_{i}%
)$.

Next, we recall
\emph{weighted }$\omega$\emph{-rational expressions over }$A$ \emph{and} $K$
which are defined by the grammar $E::=E+E\mid F\cdot E\mid F^{\omega}$ 
where $F$ is any weighted rational expression. We denote by $\omega$-$RE(K,A)$
the class of all such weighted $\omega$-rational expressions over $A$ and $K$. 
Similarly we define the class of \emph{generalized weighted }$\omega
$-\emph{rational expressions over }$A$ \emph{and} $K$ which is given by the
grammar $E::=E+E\mid F\cdot E\mid F^{\omega}\mid E\odot E $, where $F$ is any generalized weighted rational expression. We shall denote by
$\omega$-$GRE(K,A)$ the class of generalized weighted $\omega$-rational
expressions over $A$ and $K$. The \emph{semantics} of a (generalized) weighted
$\omega$-rational expression $E$ is a series $\left\Vert E\right\Vert \in
K\left\langle \left\langle A^{\omega}\right\rangle \right\rangle $ which is
defined inductively by $\left\Vert E+E^{\prime}\right\Vert =\left\Vert E\right\Vert
+\left\Vert E^{\prime}\right\Vert , \ \ \left\Vert F\cdot E\right\Vert =\left\Vert F\right\Vert
\cdot\left\Vert E\right\Vert ,  \ \ \left\Vert F^{\omega}\right\Vert =\left\Vert F\right\Vert ^{\omega
}$ (if $\left \Vert F \right \Vert$ is proper; otherwise undefined), \ \ $\left\Vert E\odot E^{\prime}\right\Vert =\left\Vert E\right\Vert
\odot\left\Vert E^{\prime}\right\Vert$. A series $s\in K\left\langle \left\langle A^{\omega}\right\rangle
\right\rangle $ is called $\omega$-\emph{rational }(resp. \emph{g-}$\omega
$\emph{-rational}) if there is a weighted (resp. generalized weighted) $\omega
$-rational expression $E$ such that $s=\left\Vert E\right\Vert $. The subsequent result states the coincidence of $\omega$-rational and $\omega$-recognizable series, i.e., infinitary series accepted by weighted automata over infinite words. For the theory on weighted automata over infinite words we refer the reader to \cite{Es:Ha,Dr:Wh}. 
\begin{theorem}
\label{Kleene-omega}\cite{Es:Ha} 
Let $K$ be a totally complete semiring
and $A$\ an alphabet. Then a series $s\in K\left\langle \left\langle
A^{\omega}\right\rangle \right\rangle $ is $\omega$-rational iff it is $\omega$-recognizable.
\end{theorem}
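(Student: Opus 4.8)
The plan is to prove Theorem~\ref{Kleene-omega} by the classical Kleene--Sch\"{u}tzenberger strategy for infinite words, carried out in a totally complete semiring and reduced, wherever possible, to the finite-word case of Theorem~\ref{Kleene}.

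For the inclusion ``$\omega$-rational $\Rightarrow$ $\omega$-recognizable'', I would show that the class of $\omega$-recognizable series is closed under the three operations generating $\omega$-rational expressions. Closure under sum is obtained by taking the disjoint union of two weighted B\"{u}chi automata, where the infinitary-sum axioms of a complete semiring let us split the sum over all successful runs of the union into the sums over the runs of each component. For $F\cdot E$ with $F$ a rational finite-word series and $E$ an $\omega$-rational series, I would take a normalized weighted automaton for $F$ with a single final state, identify that state with the initial state of a weighted B\"{u}chi automaton for $E$, and verify the Cauchy-product identity for the behaviors using distributivity of the product over infinite sums together with the partition axiom for $\sum$. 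For $F^{\omega}$ with $F$ proper and rational, I would take a weighted automaton for $F$ with a single initial and a single final state, glue these two states into one B\"{u}chi-accepting state, and check that the resulting behavior equals $\left\Vert F\right\Vert^{\omega}$; properness of $F$ guarantees that every accepted infinite word factors into genuinely infinitely many nonempty blocks (factorizations with cofinitely many empty blocks contribute weight $0$), so that the countable-product axioms of a totally complete semiring apply and make the identity hold. In each step one must also check that properness is propagated as needed so that all occurring $\omega$- and $^{+}$-iterations are defined.

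For the converse ``$\omega$-recognizable $\Rightarrow$ $\omega$-rational'', I would start from a weighted B\"{u}chi automaton $\mathcal{A}$ with state set $Q$, split $Q$ into its set $F$ of accepting states and $\bar{F}=Q\setminus F$, and decompose each successful run according to its visits to $F$: a finite initial segment reaching $F$ for the first time, followed by infinitely many finite ``return'' segments from $F$ to $F$ passing through $\bar{F}$ in between. Writing $R_{p,q}$ for the finite-word series collecting all finite paths of $\mathcal{A}$ from $p$ to $q$, a Gaussian-elimination argument over the complete semiring expresses the behavior of $\mathcal{A}$ as a finite sum of terms $s_{p}\cdot t_{p}^{\omega}$, where $s_{p}$ and $t_{p}$ are built from the $R_{p,q}$ by sum, Cauchy product and $^{+}$-iteration, and $t_{p}$ may be taken proper; by Theorem~\ref{Kleene} the series $s_{p},t_{p}$ are rational, so the behavior is $\omega$-rational. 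The hard part here, and the only genuinely new point over the finite-word case, will be to turn the matrix $\omega$-power arising from the B\"{u}chi condition into scalar $\omega$-powers without over- or under-counting runs over a possibly non-idempotent, non-commutative totally complete semiring; this is handled by peeling off the states of $F$ one at a time and appealing to the infinite sum and product axioms, and is precisely the computation carried out in~\cite{Es:Ha} (see also \cite{Dr:Wh}).
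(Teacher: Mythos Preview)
The paper does not prove Theorem~\ref{Kleene-omega}; it is quoted without proof as a known result from~\cite{Es:Ha}. Your sketch is in line with the classical argument carried out in that reference: closure of $\omega$-recognizable series under sum, Cauchy product with a finitary series, and $\omega$-iteration of a proper finitary series for one direction, and an $\omega$-Kleene decomposition of the behaviour of a weighted B\"{u}chi automaton along visits to accepting states for the converse. Since there is no in-paper proof to compare against, nothing further needs to be said beyond noting that your outline matches the cited source.
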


It is well-known (cf. \cite{Dr:Wh}) that if the semiring $K$ is totally commutative complete, then the class of 
$\omega$-recognizable series over $A$ and $K$ is closed under Hadamard product.
Consequently, if $K$ is totally commutative complete,
then a series $s\in K\left\langle \left\langle
A^{\omega}\right\rangle \right\rangle $ is g-$\omega$-rational iff it is
$\omega$-recognizable.

We shall need to extend the syntax of \emph{LDL} formulas and weighted
\emph{LDL }formulas as follows.

\begin{definition}
\label{LDL_inf_synt}\cite{Va:Th} The syntax of formulas $\xi$\ of the LDL over $A$,
interpreted over infinite words, is given by the grammar
\begin{align*}
\xi &  ::=true\mid p_{a}\mid\lnot\xi\mid\xi\wedge\xi\mid\left\langle
\eta\right\rangle \xi\\
\eta &  ::=\phi\mid\xi?\mid\eta+\eta\mid\theta;\eta\mid\theta^{\omega}%
\end{align*}
where $p_{a}\in P$, $\phi$ denotes a propositional formula over the atomic
propositions in $P$, and $\theta$ denotes an expression as in Definition \ref{LDL_fin_synt}.
\end{definition}

For every \emph{LDL} formula $\xi$ and $w\in A^{\omega}$ we define the satisfaction relation
$w\models\xi$, inductively on the structure of $\xi$, as follows:
\begin{itemize}
\item[-] $w\models true,$

\item[-] $w\models p_{a}$ \ iff $\ w(0)=a,$

\item[-] $w\models\lnot\xi$ \ iff $\ w\not \models \xi,$

\item[-] $w\models\xi_{1}\wedge\xi_{2}$ \ iff $\ w\models\xi_{1}$ and
$w\models\xi_{2},$

\item[-] $w\models\left\langle \phi\right\rangle \xi$ \ iff $\ w\models\phi$
and $w_{\geq1}\models\xi,$

\item[-] $w\models\left\langle \xi_{1}?\right\rangle \xi_{2}$ \ iff
$\ w\models\xi_{1}$ and $w\models\xi_{2},$

\item[-] $w\models\left\langle \eta_{1}+\eta_{2}\right\rangle \xi$ \ iff
$\ w\models\left\langle \eta_{1}\right\rangle \xi$ or $w\models\left\langle
\eta_{2}\right\rangle \xi,$

\item[-] $w\models\left\langle \theta;\eta\right\rangle \xi$ \ iff $\ w=uv$
with $u\in A^{\ast}$, $u\models\left\langle \theta\right\rangle true$, and
$v\models\left\langle \eta\right\rangle \xi,$

\item[-] $w\models\left\langle \theta^{\omega}\right\rangle \xi$ \ iff
\ $\xi=true$, $w=w_{0}w_{1}\ldots$, and $w_{i}\models\left\langle
\theta\right\rangle true$ for every $i\geq0.$
\end{itemize}

For an \emph{LDL} formula $\xi$, we let $L_{\omega}(\xi)=\{w \in A^{\omega} \mid w \models \xi  \}$, the infinitary language defined by $\xi$. An infinitary language $L\in A^{\omega
}$ is called \emph{LDL}-$\omega$-definable if there is an \emph{LDL} formula
$\xi$ such that $L=L_{\omega}(\xi)$. The coincidence of $\omega$-rational and \emph{LDL}-$\omega$-definable languages was stated in \cite{Va:Th}.

\begin{theorem}
\label{LDL_RE_omega}\cite{Va:Th} A language $L\subseteq A^{\omega}$ is
LDL-$\omega$-definable iff $L$ is $\omega$-rational.
\end{theorem}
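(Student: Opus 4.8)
The plan is to deduce both implications from the finite-word characterization (Theorem~\ref{LDL_RE}) together with classical facts about $\omega$-rational languages, using the elementary identity $\langle\chi?\rangle true\equiv\chi$ valid for any LDL formula $\chi$ (over finite, resp.\ infinite, words): a test inside a path expression may import an arbitrary LDL-definable, hence arbitrary rational, language. Throughout we read ``$L\subseteq A^{\omega}$ is $\omega$-rational'' over the Boolean semiring, so that, by Theorem~\ref{Kleene-omega}, it coincides with ``$L$ is $\omega$-recognizable''.

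For the implication from $\omega$-rational to LDL-$\omega$-definable, I would argue by induction on an $\omega$-rational expression $E$, generated by $E::=E+E\mid F\cdot E\mid F^{\omega}$ with each $F$ a rational expression, proving that there is a path expression $\eta_{E}$ of the infinite-word grammar with $L_{\omega}(\langle\eta_{E}\rangle true)=L(E)$. If $E=F^{\omega}$, use Theorem~\ref{LDL_RE} to choose a finite-word LDL formula $\psi$ with $L(\psi)=L(F)\setminus\{\varepsilon\}$ and set $\eta_{E}=(\psi?)^{\omega}$; since a finite word $u$ satisfies $\langle\psi?\rangle true$ iff $u\in L(\psi)$, and $\varepsilon\notin L(\psi)$, the semantic clause for $\langle\theta^{\omega}\rangle true$ yields exactly $(L(F)\setminus\{\varepsilon\})^{\omega}=L(F)^{\omega}$. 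If $E=F\cdot E'$, choose $\psi$ with $L(\psi)=L(F)$ and set $\eta_{E}=\psi?\,;\,\eta_{E'}$, so that the clause for $\langle\theta;\eta\rangle true$ gives $L(F)\cdot L(E')$. If $E=E_{1}+E_{2}$, set $\eta_{E}=\eta_{E_{1}}+\eta_{E_{2}}$ and use the clause for $\langle\eta_{1}+\eta_{2}\rangle true$. This reduces the direction entirely to Theorem~\ref{LDL_RE} plus the above bookkeeping.

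For the converse, given an infinite-word LDL formula $\xi$ I would prove that $L_{\omega}(\xi)$ is $\omega$-rational by simultaneous induction on $\xi$ and on the path expressions occurring in it. The cases $true$ and $p_{a}$ are immediate; $\xi_{1}\wedge\xi_{2}$ and $\langle\xi_{1}?\rangle\xi_{2}$ yield intersections and $\lnot\xi_{1}$ a complement, so I invoke closure of the $\omega$-rational languages under union, intersection and complement. For $\langle\eta\rangle\xi'$ I recurse on $\eta$: the case $\phi$ gives $\{a\in A:a\models\phi\}\cdot L_{\omega}(\xi')$, the case $\xi_{1}?$ an intersection with $L_{\omega}(\xi_{1})$, the case $\eta_{1}+\eta_{2}$ a union, the case $\theta;\eta'$ the concatenation $R\cdot L_{\omega}(\langle\eta'\rangle\xi')$, and the case $\theta^{\omega}$ (where necessarily $\xi'=true$) the $\omega$-power $R^{\omega}$; here $R=L(\langle\theta\rangle true)$ is rational by Theorem~\ref{LDL_RE}, since $\langle\theta\rangle true$ is a finite-word LDL formula. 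As the $\omega$-rational languages are closed under left-concatenation with a rational language and under $\omega$-iteration — operations that are part of the very definition of $\omega$-rationality — the induction closes, and Theorem~\ref{Kleene-omega} (Büchi) places $L_{\omega}(\xi)$ among the $\omega$-rational languages.

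I do not anticipate a step that is hard in a deep sense; the delicacy is in matching this paper's particular semantics. In the converse direction the one substantial external input is the classical theorem that $\omega$-rational (equivalently, Büchi-recognizable) languages are closed under complement, needed for $\lnot\xi$. In the first direction the point to watch is that a test-free translation of a rational expression into a path expression is \emph{not} correct here: because $\langle\phi\rangle true$ constrains only the first letter of a word, $\langle p_{a};p_{b}\rangle true$ already holds of $aab$ rather than only of $ab$. Channelling each rational ingredient through Theorem~\ref{LDL_RE} and a single test $\psi?$ repairs this, since $\langle\psi?\rangle true\equiv\psi$ defines $L(\psi)$ on the nose; one must only be careful, in the $\omega$-power case, to strip $\varepsilon$ from the iterated language so that the factorization $w=w_{0}w_{1}\ldots$ demanded by the clause for $\langle\theta^{\omega}\rangle true$ is genuinely an infinite one.
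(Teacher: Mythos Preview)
The paper does not give its own proof of this statement: Theorem~\ref{LDL_RE_omega} is quoted from \cite{Va:Th} and used as a black box (notably in the proof of Theorem~\ref{LDL_to_RE_omega}). There is therefore no paper proof to compare your proposal against.

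That said, your argument is correct in both directions. The reduction to Theorem~\ref{LDL_RE} via tests $\psi?$ is exactly the right device under this paper's semantics, and your cautionary remark about why a naive test-free translation of a rational expression into a path expression fails is on point. One small imprecision: in the converse direction, for $\eta=\theta^{\omega}$ you write ``where necessarily $\xi'=true$''. The grammar in Definition~\ref{LDL_inf_synt} allows $\langle\theta^{\omega}\rangle\xi'$ for arbitrary $\xi'$; it is the \emph{semantic} clause that makes $L_{\omega}(\langle\theta^{\omega}\rangle\xi')=\emptyset$ whenever $\xi'$ is not syntactically $true$. Since $\emptyset$ is $\omega$-rational this does not affect the induction, but the wording should be adjusted accordingly.
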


Next we introduce the syntax of the weighted \emph{LDL} formulas
interpreted over infinite words.

\begin{definition}
\label{wLDL_inf_synt}The syntax of formulas $\zeta$\ of the \emph{weighted}
LDL \emph{over} $A$ \emph{and} $K$, interpreted over infinite words, is given
by the grammar
\begin{align*}
\zeta &  ::=k\mid\xi\mid\zeta\oplus\zeta\mid\zeta\otimes\zeta\mid\left\langle
\pi\right\rangle \zeta\\
\pi &  ::=\phi\mid\zeta?\mid\pi\oplus\pi\mid\rho\cdot\pi\mid\rho^{\varpi}%
\end{align*}
where $k\in K$, $p_{a}\in P$, $\phi$ denotes a propositional formula over
the atomic propositions in $P$, $\xi$ denotes an LDL formula as in Definition \ref{LDL_inf_synt}, and $\rho$ an expression as in Definition \ref{wLDL_fin_synt}. 
\end{definition}

We denote by $LDL_{\omega}(K,A)$ the set of all weighted \emph{LDL} formulas
$\zeta$ over $A$ and $K$. We represent the semantics $\left\Vert
\zeta\right\Vert _{\omega}$ of formulas $\zeta\in LDL_{\omega}(K,A)$ as series
in $K\left\langle \left\langle A^{\omega}\right\rangle \right\rangle $. For
the semantics of \emph{LDL} formulas $\xi$\ interpreted over infinite words,
we use the satisfaction relation $\models$ as defined above. 

\begin{definition}
\label{def-sem_omega}Let $\zeta\in LDL_{\omega}(K,A)$. The \emph{semantics} of
$\zeta$ is a series $\left\Vert \zeta\right\Vert _{\omega}\in K\left\langle
\left\langle A^{\omega}\right\rangle \right\rangle $. For every $w\in
A^{\omega}$ the value $\left\Vert \zeta\right\Vert _{\omega}(w)$ is defined
inductively as follows:

$\begin{array}[c]{ll}
\left\Vert k\right\Vert_{\omega} (w)=k, &  \ \  \  \ \ \  \ \ \ \ \  \ \left\Vert \zeta_{1}\oplus\zeta_{2}\right\Vert_{\omega} (w)=\left\Vert
\zeta_{1}\right\Vert_{\omega} (w)+\left\Vert \zeta_{2}\right\Vert_{\omega} (w),  \\
\left\Vert \xi\right\Vert_{\omega} (w)=\left\{
\begin{array}
[c]{rl}%
1 & \text{if }w\models\xi\\
0 & \text{otherwise}%
\end{array}
\right.  ,       &    \ \  \  \  \ \ \ \ \ \ \  \    \left\Vert \zeta_{1}\otimes\zeta_{2}\right\Vert_{\omega} (w)=\left\Vert
\zeta_{1}\right\Vert_{\omega} (w)\cdot\left\Vert \zeta_{2}\right\Vert_{\omega} (w),  \\
\left\Vert \left\langle \phi\right\rangle \zeta\right\Vert_{\omega}
(w)=\left\Vert \phi\right\Vert_{\omega} (w)\cdot\left\Vert \zeta\right\Vert_{\omega} (w_{\geq
1}),    &  \ \  \ \  \ \ \ \ \  \ \  \  \left\Vert \left\langle \zeta_{1}?\right\rangle \zeta_{2}\right\Vert_{\omega}
(w)=\left\Vert \zeta_{1}\right\Vert_{\omega} (w)\cdot\left\Vert \zeta_{2}\right\Vert_{\omega}
(w),     
\end{array}$

$\begin{array}[c]{ll}
\left\Vert \left\langle \pi_{1}\oplus\pi_{2}\right\rangle
\zeta\right\Vert_{\omega} (w)=\left\Vert \left\langle \pi_{1}\right\rangle
\zeta\right\Vert_{\omega} (w)+\left\Vert \left\langle \pi_{2}\right\rangle
\zeta\right\Vert_{\omega} (w),  \\
\left\Vert \left\langle \rho\cdot\pi\right\rangle \zeta\right\Vert_{\omega}
(w)=\underset{w=uv,u\in A^{\ast}}{\sum}\left(  \left\Vert \left\langle
\rho\right\rangle true\right\Vert (u)\cdot\left\Vert \left\langle
\pi\right\rangle \zeta\right\Vert_{\omega} (v)\right)  ,\\
\left\Vert \left\langle \rho^{\varpi}\right\rangle \zeta\right\Vert_{\omega}
(w)=\left\{
\begin{tabular}
[c]{ll}%
$\underset{w=w_{0}w_{1}\ldots}{\sum}\underset{i\geq0}{\prod}\left\Vert
\left\langle \rho\right\rangle true\right\Vert (w_{i})$ & if $\zeta=true$\\
$0$ & otherwise
\end{tabular}
\ \ ,\right.
\end{array}$

\noindent where for the definition of $\left\Vert \left\langle \rho^{\varpi}\right\rangle \zeta\right\Vert_{\omega}
(w)$ we assume that $\left\Vert \left\langle \rho \right\rangle true \right\Vert$ is proper.
\end{definition}

A series $s\in K\left\langle \left\langle A^{\omega}\right\rangle
\right\rangle $ is called \emph{LDL}-$\omega$-\emph{definable }if there is a
formula $\zeta\in LDL_{\omega}(K,A)$ such that $s=\left\Vert \zeta\right\Vert
_{\omega}$.
For $K=\mathbb{B}$ and any $L \subseteq A^{\omega}$, clearly $L$ is \emph{LDL}-$\omega$-definable iff $1_L \in \mathbb{B} \left\langle \left\langle A^{\omega}\right\rangle \right\rangle$ is \emph{LDL}-$\omega$-definable, and therefore our weighted \emph{LDL} generalizes \emph{LDL} over infinite words.

\begin{example}\label{ex3}
Let $(\mathbb{N} \cup \{\infty\} ,+, \cdot, 0, 1)$ be the totally complete semiring of extended natural numbers, $A = \{a,b\}$, and $k \in \mathbb{N} \setminus \{0\} $. We consider the LDL formula $\psi_1= \left\langle \left (\left ( \left\langle p_b? \right \rangle Last \right ) ? \right )^{\oplus} \right \rangle true \vee (\lnot p_a \wedge \lnot p_b)$,  
the weighted LDL formula $\psi_2=\left \langle (k \otimes p_a)? \right \rangle Last$, and we let 
$$ \zeta = \left\langle \left( ( \psi_1? \cdot \psi_2? \cdot \psi_1? \cdot \psi_2?)^{\oplus}    \oplus (\lnot p_a \wedge \lnot p_b)? \right ) \cdot \left( \left( \left \langle p_b? \right \rangle Last \right)? \right)^{\varpi}  \right \rangle true.$$
By a standard computation we can show that for every $w \in A^{\omega}$ we get $\left \Vert \zeta \right \Vert_{\omega} (w)= k^{|w|_a}$ whenever $|w|_a < \infty$ and it is even, and $\left \Vert \zeta \right \Vert_{\omega} (w)=0$ otherwise. Furthermore, since the infinitary language $L= \{ w \in A^{\omega} \mid w \text{ contains an even number of } a's \}$ is not $\omega$-star-free (cf. \cite{Mu:No}), with a similar argument as in Example \ref{ex1}, we can show that the series $\left \Vert \zeta \right \Vert_{\omega}$ is not $\omega$-definable by any weighted FO logic sentence (resp. LTL formula) (cf. Section 5 and \cite{Ma:Ph,Ma:Se}) over the extended naturals.  
\end{example}

The next theorem states that every generalized weighted $\omega$-rational
expression can be translated to a weighted \emph{LDL} formula in linear time. The proof is done by induction on the structure of generalized weighted $\omega$-rational
expressions, as in the proof of Theorem \ref{RE_to_LDL}.

\begin{theorem}
\label{RE_to_LDL_omega}For every generalized weighted $\omega$-rational
expression $E\in\omega$-$GRE(K,A)$ we can construct, in linear time, a
weighted LDL formula $\zeta_{E}\in LDL_{\omega}(K,A)$ with $\left\Vert
\zeta_{E}\right\Vert _{\omega}=\left\Vert E\right\Vert $.
\end{theorem}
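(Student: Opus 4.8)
The plan is to mimic the proof of Theorem~\ref{RE_to_LDL} by structural induction on the generalized weighted $\omega$-rational expression $E \in \omega\text{-}GRE(K,A)$, but now distinguishing carefully between the "finitary part" $F$ (a generalized weighted rational expression over finite words, for which Theorem~\ref{RE_to_LDL} already supplies a formula $\varphi_F \in LDL(K,A)$ with $\|\varphi_F\| = \|F\|$) and the "infinitary part" $E$ proper. First I would fix the translation table: for an $\omega$-expression of the form $E = E_1 + E_2$ set $\zeta_E = \zeta_{E_1} \oplus \zeta_{E_2}$; for $E = F \cdot E'$ set $\zeta_E = \langle \varphi_F? \cdot \pi_{\zeta_{E'}} \rangle true$, where I use the Cauchy-product test modality $\rho \cdot \pi$ of Definition~\ref{wLDL_inf_synt} with $\rho$ obtained from $\varphi_F$ as a test $\varphi_F?$ and $\pi$ from $\zeta_{E'}$; for $E = F^{\omega}$ (defined only when $\|F\|$ is proper) set $\zeta_E = \langle (\varphi_F?)^{\varpi} \rangle true$; and for $E = E_1 \odot E_2$ set $\zeta_E = \langle \zeta_{E_1}? \rangle \zeta_{E_2}$, exactly paralleling the Hadamard clause in Theorem~\ref{RE_to_LDL}.

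The second step is the verification that $\|\zeta_E\|_{\omega} = \|E\|$ for each case, using Definition~\ref{def-sem_omega}. The base-like case $E = F\cdot E'$ requires checking that $\|\langle \varphi_F? \rangle true\| = \|\varphi_F\| = \|F\|$ as a finitary series and that the $\omega$-Cauchy-product clause $\|\langle \rho \cdot \pi \rangle \zeta\|_{\omega}(w) = \sum_{w=uv,\,u\in A^*} \|\langle\rho\rangle true\|(u)\cdot\|\langle\pi\rangle\zeta\|_{\omega}(v)$ then matches $(\|F\|\cdot\|E'\|)(w)$ termwise. For $E = F^{\omega}$ one invokes the clause $\|\langle (\varphi_F?)^{\varpi}\rangle true\|_{\omega}(w) = \sum_{w=w_0w_1\ldots}\prod_{i\geq 0}\|\langle \varphi_F?\rangle true\|(w_i) = \sum_{w=w_0w_1\ldots}\prod_{i\geq 0}\|F\|(w_i) = \|F\|^{\omega}(w)$, which is well-defined because $\|F\|$ proper matches the side condition on $\langle \rho \rangle true$ being proper; here I would also note that the outer formula is $true$, matching the restriction that $\rho^{\varpi}$ is nontrivial only when $\zeta = true$. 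The sum and Hadamard cases are immediate from the inductive hypothesis and the corresponding semantic clauses, just as in the finitary proof.

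One bookkeeping point deserves care: in Definition~\ref{wLDL_inf_synt} a path expression $\pi$ can only start an $\omega$-iteration via $\rho^{\varpi}$ where $\rho$ is a \emph{finitary} path expression as in Definition~\ref{wLDL_fin_synt}, and the recursion $\rho \cdot \pi$ glues a finitary $\rho$ in front of an infinitary $\pi$; so when handling $E = F\cdot E'$ I must make sure $E'$ is itself an $\omega$-expression (which it is, by the grammar of $\omega\text{-}GRE(K,A)$) so that $\zeta_{E'}$ produces a genuine $LDL_{\omega}(K,A)$ formula and hence a legitimate $\pi$, and that $F$ being an ordinary generalized weighted rational expression lets $\varphi_F?$ serve as a finitary $\rho$. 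Thus the induction is really a two-sorted one, with Theorem~\ref{RE_to_LDL} doing the work for the $F$-sort and the new clauses for the $E$-sort.

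The main obstacle, such as it is, is purely notational rather than mathematical: one must keep the finitary semantics $\|\cdot\|$ and the infinitary semantics $\|\cdot\|_{\omega}$ straight throughout, and verify that the side conditions ("$\|\langle\rho\rangle true\|$ proper", "$\zeta = true$ for $\rho^{\varpi}$") are preserved under the translation precisely because the defining conditions on $F^{\omega}$ in $\omega\text{-}GRE(K,A)$ ($\|F\|$ proper) transfer verbatim. Since each clause is linear in the size of its subexpressions and Theorem~\ref{RE_to_LDL} is linear-time, the overall construction is linear time, as claimed. \hfill $\square$
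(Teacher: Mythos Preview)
Your proposal is correct and takes essentially the same approach as the paper, which merely says the proof is by structural induction on $E$ as in Theorem~\ref{RE_to_LDL}. Your two-sorted induction (invoking Theorem~\ref{RE_to_LDL} for the finitary components $F$ and handling the four $\omega$-clauses via the corresponding $LDL_{\omega}$ constructs $\oplus$, $\rho\cdot\pi$, $\rho^{\varpi}$, and $\zeta?$) is exactly the intended argument, and your treatment of the side conditions and linear-time bound is accurate.
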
\smallskip

In the sequel, we show that also the converse result holds. For this, we need the subsequent lemma.

\begin{lemma}
\label{reg_expr_0_1_omega}Let $E$ be an $\omega$-rational expression over
$A$\ and $L(E)$ the language defined by $E$. Then, there is an $E^{\prime}%
\in\omega$-$RE(K,A)$ such that $\left\Vert E^{\prime}\right\Vert (w)=1$ if
$w\in L(E)$ and $\left\Vert E^{\prime}\right\Vert (w)=0$ otherwise, for every
$w\in A^{\omega}$.
\end{lemma}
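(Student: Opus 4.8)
The plan is to mimic the proof of Lemma~\ref{reg_expr_0_1} (its finite-word analogue), replacing the finite deterministic automaton by a deterministic B\"uchi (or Muller) automaton. First I would recall that, by Theorem~\ref{LDL_RE_omega} together with the classical theory of $\omega$-regular languages, the language $L(E)\subseteq A^{\omega}$ defined by the $\omega$-rational expression $E$ is $\omega$-regular, and hence there is a deterministic Muller automaton $\mathcal{B}=(Q,A,q_{0},\delta,\mathcal{F})$ with $L(\mathcal{B})=L(E)$. (If one prefers to stay within the B\"uchi framework and the weighted-automata-over-infinite-words formalism used in the paper, one can instead take a nondeterministic B\"uchi automaton; but determinism makes the weight assignment below unambiguous, so a deterministic Muller automaton is the cleanest choice.)

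Next I would turn $\mathcal{B}$ into a weighted automaton $\mathcal{B}'$ over $A$ and $K$ with weights in $\{0,1\}$: keep the same state set, the same single initial state with initial weight $1$, give every transition of $\delta$ the weight $1$, and keep the Muller acceptance condition $\mathcal{F}$ unchanged. Since $\mathcal{B}$ is deterministic, for every $w\in A^{\omega}$ there is exactly one run of $\mathcal{B}'$ on $w$, its weight (a product of $1$'s, which is $1$ by the axiom $\prod_{i\ge0}1=1$ for totally complete semirings) is $1$, and this run is accepting iff $w\in L(\mathcal{B})=L(E)$. Hence $\left\Vert\mathcal{B}'\right\Vert(w)=1$ if $w\in L(E)$ and $\left\Vert\mathcal{B}'\right\Vert(w)=0$ otherwise, exactly the series we want.

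Finally I would invoke Theorem~\ref{Kleene-omega} (the Kleene--Sch\"utzenberger theorem for $\omega$-recognizable series over totally complete semirings): since $\mathcal{B}'$ is a weighted automaton over infinite words, its behaviour $\left\Vert\mathcal{B}'\right\Vert$ is $\omega$-recognizable, hence $\omega$-rational, so there is an $E'\in\omega\text{-}RE(K,A)$ with $\left\Vert E'\right\Vert=\left\Vert\mathcal{B}'\right\Vert$, and this $E'$ satisfies the claim.

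The only genuine obstacle is the determinisation/acceptance bookkeeping: the paper's weighted automata over infinite words carry a Muller (or B\"uchi) acceptance condition, and one must be sure that putting weight $1$ everywhere together with that condition really yields the indicator series and not, say, a series taking the value $1$ infinitely often in a problematic way. This is handled precisely by using a \emph{deterministic} acceptor and the axiom $\prod_{i\ge0}1=1$, which forces each (unique) run to have weight $1$; with nondeterminism one would instead need the idempotent-like behaviour of finite sums of $1$'s, which is why determinism is preferable. Everything else is routine, and the construction is effective, so $E'$ can be built explicitly from $E$.
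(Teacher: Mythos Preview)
Your proposal is correct and follows exactly the approach the paper intends: the paper gives no explicit proof for this lemma, but its finite-word analogue (Lemma~\ref{reg_expr_0_1}) is proved by taking a deterministic automaton for $L(E)$ and viewing it as a weighted automaton with weights $0$ and $1$, then invoking Kleene--Sch\"utzenberger, and your argument is precisely the $\omega$-version of this (with a deterministic Muller automaton replacing the deterministic finite automaton, and Theorem~\ref{Kleene-omega} replacing Theorem~\ref{Kleene}). One cosmetic remark: the appeal to Theorem~\ref{LDL_RE_omega} is unnecessary, since $E$ is already an $\omega$-rational expression and so $L(E)$ is $\omega$-regular by definition.
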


\begin{theorem}
\label{LDL_to_RE_omega}For every weighted LDL formula $\zeta\in LDL_{\omega
}(K,A)$ we can construct a generalized weighted $\omega$-rational expression
$E_{\zeta}\in\omega$-$GRE(K,A)$ such that $\left\Vert E_{\zeta}\right\Vert
=\left\Vert \zeta\right\Vert _{\omega}$.
\end{theorem}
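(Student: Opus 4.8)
The plan is to proceed by induction on the structure of weighted \emph{LDL} formulas $\zeta\in LDL_{\omega}(K,A)$, mirroring the structure of the proof of Theorem \ref{LDL_to_RE} but carrying the induction over both $\zeta$ and the auxiliary test expressions $\pi$. The base cases are handled as follows. If $\zeta=\xi$ is an \emph{LDL} formula interpreted over infinite words, then by Theorem \ref{LDL_RE_omega} the language $L_{\omega}(\xi)$ is $\omega$-rational, and by Lemma \ref{reg_expr_0_1_omega} there is an $\omega$-rational expression $E_{\xi}\in\omega\mbox{-}RE(K,A)\subseteq\omega\mbox{-}GRE(K,A)$ with $\left\Vert E_{\xi}\right\Vert(w)=1$ exactly when $w\models\xi$, so $\left\Vert E_{\xi}\right\Vert=\left\Vert\xi\right\Vert_{\omega}$. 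If $\zeta=k\in K$, then, as in Theorem \ref{LDL_to_RE}, the constant series $\widetilde k$ over $A^{\omega}$ is obtained by the generalized weighted $\omega$-rational expression $E_{\zeta}=k\varepsilon\cdot(1A)^{\omega}$ where $1A=\sum_{a\in A}a$, since $\left\Vert(1A)^{\omega}\right\Vert=\widetilde 1$ on $A^{\omega}$ and hence $\left\Vert E_{\zeta}\right\Vert(w)=k$ for all $w\in A^{\omega}$.

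For the inductive step, the cases $\zeta=\zeta_{1}\oplus\zeta_{2}$ and $\zeta=\zeta_{1}\otimes\zeta_{2}$ follow from the induction hypothesis together with the closure of $\omega\mbox{-}GRE(K,A)$ under sum and Hadamard product (the latter being the reason we work with generalized expressions). The case $\zeta=\left\langle\phi\right\rangle\zeta'$ is handled exactly as in Theorem \ref{LDL_to_RE}: setting $E_{\zeta}=E_{\phi}\odot(1A\cdot E_{\zeta'})$, one computes, using that $\left\Vert 1A\right\Vert(w(0))=1$, that $\left\Vert E_{\zeta}\right\Vert(w)=\left\Vert\phi\right\Vert(w)\cdot\left\Vert\zeta'\right\Vert_{\omega}(w_{\geq 1})=\left\Vert\left\langle\phi\right\rangle\zeta'\right\Vert_{\omega}(w)$. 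The cases $\zeta=\left\langle\zeta_{1}?\right\rangle\zeta_{2}$ and $\zeta=\left\langle\pi_{1}\oplus\pi_{2}\right\rangle\zeta'$ use, respectively, closure under Hadamard product and sum. For $\zeta=\left\langle\rho\cdot\pi\right\rangle\zeta'$ we apply the induction hypothesis to $\left\langle\rho\right\rangle true$ (a formula over finite words, handled by Theorem \ref{LDL_to_RE}, yielding $E_{1}\in GRE(K,A)$) and to $\left\langle\pi\right\rangle\zeta'$ (yielding $E_{2}\in\omega\mbox{-}GRE(K,A)$), and put $E_{\zeta}=E_{1}\cdot E_{2}$; the defining equation for $\left\Vert\left\langle\rho\cdot\pi\right\rangle\zeta'\right\Vert_{\omega}$ matches the Cauchy product $\left\Vert E_{1}\right\Vert\cdot\left\Vert E_{2}\right\Vert$ of a finitary with an infinitary series.

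The genuinely new case, and the one I expect to be the main obstacle, is $\zeta=\left\langle\rho^{\varpi}\right\rangle\zeta'$. Here the semantics is nonzero only if $\zeta'=true$, in which case $\left\Vert\zeta\right\Vert_{\omega}(w)=\sum_{w=w_{0}w_{1}\ldots}\prod_{i\geq 0}\left\Vert\left\langle\rho\right\rangle true\right\Vert(w_{i})$, with the factorizations ranging over decompositions into finite words and $\left\Vert\left\langle\rho\right\rangle true\right\Vert$ assumed proper. By the finitary case (Theorem \ref{LDL_to_RE}) there is $E_{1}\in GRE(K,A)$ with $\left\Vert E_{1}\right\Vert=\left\Vert\left\langle\rho\right\rangle true\right\Vert$; since this series is proper, $E_{1}^{\omega}$ is defined and $\left\Vert E_{1}^{\omega}\right\Vert=\left\Vert E_{1}\right\Vert^{\omega}$ is exactly the required $\omega$-iteration, so $E_{\zeta}=E_{1}^{\omega}$ when $\zeta'=true$ and $E_{\zeta}=0\varepsilon\cdot(1A)^{\omega}$ (the zero series) otherwise. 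The care needed here is twofold: one must verify that properness of $\left\Vert\left\langle\rho\right\rangle true\right\Vert$ is preserved down the induction so that $F^{\omega}$ in the grammar is legitimately applicable, and one must check that the $\omega$-iteration of a \emph{generalized} rational expression is again available in $\omega\mbox{-}GRE(K,A)$ — which it is, since the grammar allows $F^{\omega}$ for any $F\in GRE(K,A)$. Combining all cases completes the induction and hence the proof.
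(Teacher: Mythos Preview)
Your proposal is correct and follows essentially the same approach as the paper's own proof, which is only a sketch: structural induction on $\zeta$ mirroring Theorem~\ref{LDL_to_RE}, invoking Lemma~\ref{reg_expr_0_1_omega} for the \emph{LDL} base case and the closure of $\omega\text{-}GRE(K,A)$ under sum, Hadamard product, Cauchy product, and $\omega$-iteration for the inductive steps. You supply more detail than the paper (in particular, the explicit handling of $\left\langle\rho^{\varpi}\right\rangle\zeta'$ via $E_{1}^{\omega}$ and the zero series when $\zeta'\neq true$), but the strategy is identical.
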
 \smallskip
\begin{proof}
[Sketch] By induction on the structure of $LDL_{\omega}(K,A)$
formulas $\zeta$, using similar arguments as the ones in the proof of Theorem \ref{LDL_to_RE}. More precisely, if $\zeta=\xi$ is an \emph{LDL} formula, then we use Lemma \ref{reg_expr_0_1_omega}. For the induction steps, we use the closure of generalized
weighted $\omega$-rational expressions under sum, Hadamard and Cauchy products, and $\omega$-iteration. \hfill $\square$ \medskip
\end{proof}

By Theorems \ref{RE_to_LDL_omega} and \ref{LDL_to_RE_omega} we get the fourth 
main result of our paper.

\begin{theorem}
\label{LDL_eq_RE_omega}Let $K$ be a totally complete semiring and $A$ an
alphabet. Then a series $s\in K\left\langle \left\langle A^{\omega
}\right\rangle \right\rangle $ is LDL-$\omega$-definable iff it is g-$\omega$-rational.
\end{theorem}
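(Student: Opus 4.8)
The plan is to prove Theorem~\ref{LDL_eq_RE_omega} by combining the two directions established in Theorems~\ref{RE_to_LDL_omega} and~\ref{LDL_to_RE_omega}, exactly as the sentence preceding the statement announces. For the forward direction, suppose $s \in K\left\langle \left\langle A^{\omega}\right\rangle \right\rangle$ is LDL-$\omega$-definable, i.e., $s = \left\Vert \zeta \right\Vert_{\omega}$ for some $\zeta \in LDL_{\omega}(K,A)$. By Theorem~\ref{LDL_to_RE_omega} we obtain a generalized weighted $\omega$-rational expression $E_{\zeta} \in \omega\text{-}GRE(K,A)$ with $\left\Vert E_{\zeta} \right\Vert = \left\Vert \zeta \right\Vert_{\omega} = s$, so $s$ is g-$\omega$-rational. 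For the converse, if $s$ is g-$\omega$-rational, then $s = \left\Vert E \right\Vert$ for some $E \in \omega\text{-}GRE(K,A)$, and Theorem~\ref{RE_to_LDL_omega} supplies a weighted LDL formula $\zeta_E \in LDL_{\omega}(K,A)$ with $\left\Vert \zeta_E \right\Vert_{\omega} = \left\Vert E \right\Vert = s$, so $s$ is LDL-$\omega$-definable. This establishes the equivalence.

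Since both ingredient theorems are already available, there is essentially no residual obstacle here: the proof is a one-line bookkeeping argument. The only point worth a sentence of care is that Theorem~\ref{LDL_to_RE_omega} (and its finite-word analogue used inside) relies on the translation of unweighted LDL formulas over infinite words into $\omega$-rational expressions, which is Theorem~\ref{LDL_RE_omega} of Vardi, together with Lemma~\ref{reg_expr_0_1_omega} to lift a $0/1$ weighting onto the resulting language; and that Theorem~\ref{RE_to_LDL_omega} is proved by structural induction mirroring Theorem~\ref{RE_to_LDL}. Both of these are invoked as black boxes. One should also note that the hypothesis that $K$ is totally complete is exactly what is needed for the $\omega$-iteration $s^{\omega}$ and the semantics $\left\Vert \left\langle \rho^{\varpi} \right\rangle \zeta \right\Vert_{\omega}$ to be well-defined, so the statement is formulated under the right assumptions and nothing further is required.

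\begin{proof}
Immediate from Theorems~\ref{RE_to_LDL_omega} and~\ref{LDL_to_RE_omega}. Indeed, if $s\in K\left\langle \left\langle A^{\omega}\right\rangle \right\rangle$ is LDL-$\omega$-definable, say $s=\left\Vert \zeta\right\Vert_{\omega}$ with $\zeta\in LDL_{\omega}(K,A)$, then by Theorem~\ref{LDL_to_RE_omega} there is $E_{\zeta}\in\omega\text{-}GRE(K,A)$ with $\left\Vert E_{\zeta}\right\Vert=\left\Vert \zeta\right\Vert_{\omega}=s$, hence $s$ is g-$\omega$-rational. Conversely, if $s$ is g-$\omega$-rational, say $s=\left\Vert E\right\Vert$ with $E\in\omega\text{-}GRE(K,A)$, then by Theorem~\ref{RE_to_LDL_omega} there is $\zeta_{E}\in LDL_{\omega}(K,A)$ with $\left\Vert \zeta_{E}\right\Vert_{\omega}=\left\Vert E\right\Vert=s$, hence $s$ is LDL-$\omega$-definable. \hfill $\square$
\end{proof}
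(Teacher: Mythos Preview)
Your proposal is correct and matches the paper's own approach exactly: the paper simply states that Theorem~\ref{LDL_eq_RE_omega} follows immediately from Theorems~\ref{RE_to_LDL_omega} and~\ref{LDL_to_RE_omega}, and your write-up merely unpacks this one-line derivation.
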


By Theorem  \ref{LDL_eq_RE_omega} and the discussion following Theorem \ref{Kleene-omega} we
get the subsequent corollary.

\begin{corollary}
\label{LDL_to_MSO_omega}Let $K$ be a totally commutative complete semiring and
$A$ an alphabet. A series $s\in K\left\langle \left\langle A^{\omega
}\right\rangle \right\rangle $ is LDL-$\omega$-definable iff it is $\omega$-recognizable.
\end{corollary}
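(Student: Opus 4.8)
The plan is to combine the fourth main result (Theorem~\ref{LDL_eq_RE_omega}) with the Kleene--Schützenberger theorem for infinite words (Theorem~\ref{Kleene-omega}) and the closure remark that precedes it. Concretely, fix a totally commutative complete semiring $K$ and an alphabet $A$, and let $s\in K\left\langle \left\langle A^{\omega}\right\rangle \right\rangle$. I would first record that $K$, being totally commutative complete, is in particular a totally complete semiring, so both Theorem~\ref{LDL_eq_RE_omega} and Theorem~\ref{Kleene-omega} apply to it.

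Next I would chain the equivalences. By Theorem~\ref{LDL_eq_RE_omega}, $s$ is LDL-$\omega$-definable iff $s$ is g-$\omega$-rational. By the discussion following Theorem~\ref{Kleene-omega} — namely that over a totally commutative complete semiring the class of $\omega$-recognizable series is closed under Hadamard product, and hence a series is g-$\omega$-rational iff it is $\omega$-recognizable — $s$ is g-$\omega$-rational iff $s$ is $\omega$-recognizable. Putting these together gives: $s$ is LDL-$\omega$-definable iff $s$ is $\omega$-recognizable, which is exactly the claim.

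There is really no hard step here: the corollary is a transitivity argument, and all the substantive work has already been done in Theorem~\ref{LDL_eq_RE_omega}, Theorem~\ref{Kleene-omega}, and the closure result of~\cite{Dr:Wh}. The only point worth double-checking is that the hypothesis ``totally commutative complete'' is strong enough to invoke \emph{both} ingredients simultaneously — it implies ``totally complete'' (needed for the LDL/g-$\omega$-rational equivalence) and it implies commutativity in the appropriate infinitary sense (needed for closure of $\omega$-recognizable series under Hadamard product). Since the excerpt already states ``obviously a totally commutative complete semiring is commutative'' and that such semirings are in particular totally complete, this is immediate. I would therefore present the proof in two or three lines as a straightforward consequence, exactly as the sentence ``By Theorem~\ref{LDL_eq_RE_omega} and the discussion following Theorem~\ref{Kleene-omega} we get the subsequent corollary'' already announces.
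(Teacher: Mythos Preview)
Your proposal is correct and follows exactly the paper's own approach: the corollary is obtained directly from Theorem~\ref{LDL_eq_RE_omega} together with the discussion following Theorem~\ref{Kleene-omega}. There is nothing to add.
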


\begin{proposition}
\label{LDL_to_omega_aut}
Let $K$ be an idempotent totally commutative complete semiring and $A$ an alphabet. For every weighted LDL formula $\zeta$ we can construct, in exponential time, a weighted B\"{u}chi automaton $\mathcal{A}_{\zeta}$ such that $\left \Vert \mathcal{A}_{\zeta} \right \Vert = \left \Vert \zeta \right \Vert_{\omega} $.  
\end{proposition}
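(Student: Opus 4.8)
The plan is to follow the strategy of Proposition~\ref{LDL_to_aut} together with the idempotent variant described in the remark after it, replacing finite automata by weighted B\"{u}chi automata and exploiting the idempotency of $K$ to bypass the determinization step that causes the doubly exponential blow-up in the general case. Since $K$ is totally complete, weighted B\"{u}chi automata over $A$ and $K$ are well defined, and since $K$ is totally commutative complete the class of series they recognize is closed under sum, Hadamard product, Cauchy product with a weighted automaton on finite words, and $\omega$-iteration of a proper series, via the standard polynomial-time constructions on automata (cf. \cite{Dr:Wh,Dr:Au}).

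First I would treat the base cases. If $\zeta=\xi$ is an \emph{LDL} formula interpreted over infinite words, then by \cite{Va:Th} (using the automata-theoretic translation in the style of \cite{Gi:Li,Gi:Sy}) one constructs, in exponential time, a nondeterministic B\"{u}chi automaton recognizing $L_{\omega}(\xi)$; since $K$ is idempotent, equipping every transition of this automaton with the weight $1$ yields a weighted B\"{u}chi automaton whose behaviour is $1_{L_{\omega}(\xi)}=\left\Vert \xi\right\Vert _{\omega}$, because a word with several accepting runs still receives the value $1+\cdots+1=1$. For $\zeta=k$ we take the obvious one-state automaton. One also needs, as auxiliary inputs, finite-word weighted automata for the subformulas $\left\Vert \left\langle \rho\right\rangle true\right\Vert$ occurring in path expressions $\pi$; these are produced in exponential time by the idempotent version of Proposition~\ref{LDL_to_aut}.

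Next I would run the structural induction over $LDL_{\omega}(K,A)$, following Definition~\ref{def-sem_omega} clause by clause: for $\zeta_{1}\oplus\zeta_{2}$ and $\left\langle \pi_{1}\oplus\pi_{2}\right\rangle \zeta$ take the disjoint union; for $\zeta_{1}\otimes\zeta_{2}$ and $\left\langle \zeta_{1}?\right\rangle \zeta_{2}$ take the product (Hadamard) automaton, where commutativity of $K$ is used; for $\left\langle \phi\right\rangle \zeta$ realize the Cauchy product of the single-letter weighted automaton with behaviour $w\mapsto\left\Vert \phi\right\Vert (w(0))$ and the automaton for $\zeta$ (equivalently, use the identity expressing $\left\langle \phi\right\rangle \zeta$ as a Hadamard product as in the proof of Theorem~\ref{LDL_to_RE}); for $\left\langle \rho\cdot\pi\right\rangle \zeta$ normalize the finite-word automaton for $\left\Vert \left\langle \rho\right\rangle true\right\Vert$ to a single final state and the B\"{u}chi automaton for $\left\langle \pi\right\rangle \zeta$ to a single initial state and glue them by identifying that final state with that initial state; and for $\left\langle \rho^{\varpi}\right\rangle \zeta$, which is the zero series unless $\zeta=true$, in which case it equals $\left\Vert \left\langle \rho\right\rangle true\right\Vert ^{\omega}$, take the normalized finite-word automaton for the proper series $\left\Vert \left\langle \rho\right\rangle true\right\Vert$ and apply the usual B\"{u}chi $\omega$-power construction, redirecting every transition into the final state back to the initial state and declaring the initial state accepting. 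The properness hypothesis imposed in Definition~\ref{def-sem_omega} is exactly what makes this $\omega$-iteration defined.

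Since the only exponential steps are the translations of the \emph{LDL} subformulas $\xi$ and of the finite-word subformulas $\left\langle \rho\right\rangle true$ to automata, while every construction in the induction is polynomial in the sizes of its input automata, the overall translation of $\zeta$ to $\mathcal{A}_{\zeta}$ runs in exponential time. I expect the step requiring the most care to be the correctness of the $\omega$-power construction: one must check that the redirected automaton computes precisely $\sum_{w=w_{0}w_{1}\ldots}\prod_{i\geq0}\left\Vert \left\langle \rho\right\rangle true\right\Vert (w_{i})$, neither omitting nor multiply-counting factorizations of $w$ into nonempty finite blocks, and here the totally commutative complete axioms (which license exchanging the infinite product with the infinite sums, together with the commutativity used in the product automaton) and the idempotency of $K$ are exactly what is needed. \hfill $\square$
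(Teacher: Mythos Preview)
Your proposal is correct and follows essentially the same route as the paper's proof: translate unweighted \emph{LDL} subformulas to nondeterministic B\"{u}chi automata in exponential time (the paper invokes \cite{Fa:Pa} via \emph{PLDL}, you invoke \cite{Va:Th} in the style of \cite{Gi:Li,Gi:Sy}, which amounts to the same thing), use idempotency to read these as weighted automata, and then close under the operators by the standard polynomial constructions for disjoint union, Hadamard product, Cauchy product, and $\omega$-iteration. Your description of the $\omega$-power step (redirecting final transitions to the initial state and making it accepting) is a harmless variant of the paper's ``identify initial and final state'' formulation.
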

\begin{proof}
If $\zeta$ is an \emph{LDL} formula, then it is an \emph{PLDL} (\emph{parametric linear dynamic logic}) formula and, by \cite{Fa:Pa} we get in exponential time a nondeterministic B\"{u}chi automaton accepting the language of $\zeta$. This automaton can be considered as a weighted B\"{u}chi automaton with weights $0$ and $1$. Then, by applying structural induction on $\zeta$ we prove our claim by standard constructions on weighted B\"{u}chi automata. More precisely, for the closure under sum we take the disjoint union of two weighted B\"{u}chi automata. For Hadamard product we use the  well-known product construction for B\"{u}chi automata, showing the closure of the class of $\omega$-recognizable languages under intersection \cite{Th:Au}, reasonably translated to weighted setup. For the closure under Cauchy product we construct the corresponding normalized weighted automaton and initial weight normalized  weighted B\"{u}chi automaton, and then identify the final state of the first automaton with the initial state of the second automaton. 
Finally, for the $\omega$-iteration, we again get the normalized weighted automaton and identify its initial and final state. All the aforementioned constructions are polynomial, and our proof is completed. \hfill $\square$ \medskip
\end{proof}
  
In particular, if $K$ is a bounded distributive lattice, the equivalence of two weighted automata over $A$ and $K$ on infinite words and hence of two $LDL(K,A)$ formulas is again decidable \cite{Dr:Ra}.

\section{Comparison of weighted \emph{LDL} to other weighted logics}

In this last section we state the relation of our weighted \emph{LDL} to weighted monadic second-order logic (weighted \emph{MSO} logic for short), weighted linear temporal logic (weighted \emph{LTL} for short) and weighted $\mu$-calculus. The relation of \emph{LDL}-definable series (resp. infinitary series) to weighted \emph{MSO} logic definable series (resp. infinitary series) is immediately derived by \cite{Dr:We,Dr:Wh} and Corollary \ref{LDL_to_REC} (resp. by \cite{Dr:Wh} and Corollary \ref{LDL_to_MSO_omega}). We get the following consequences. 

\begin{corollary}
\label{LDL_to_MSO}Let $K$ be a commutative semiring and $A$ an alphabet. A
series $s\in K\left\langle \left\langle A^{\ast}\right\rangle \right\rangle $
is LDL-definable iff it is definable by a restricted weighted MSO logic
sentence over $A$ and $K$.
\end{corollary}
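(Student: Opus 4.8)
The plan is to chain together two equivalences that are already available in the excerpt. By Corollary~\ref{LDL_to_REC}, for a commutative semiring $K$ a series $s \in K\left\langle \left\langle A^{\ast}\right\rangle \right\rangle$ is \emph{LDL}-definable iff it is recognizable. So it suffices to recall the known weighted B\"uchi-type theorem of \cite{Dr:We,Dr:Wh}, which states that over a commutative semiring a series is recognizable iff it is definable by a sentence of the \emph{restricted} fragment of weighted \emph{MSO} logic. Composing these two ``iff''s gives the claim directly.

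Concretely, I would first cite Corollary~\ref{LDL_to_REC} to replace ``\emph{LDL}-definable'' by ``recognizable''. Then I would invoke \cite{Dr:We} (and its extension in \cite{Dr:Wh}) for the statement that the class of recognizable series over a commutative semiring $K$ coincides with the class of series definable by restricted weighted \emph{MSO} sentences; here ``restricted'' refers to the syntactic fragment in which universal first-order quantification is applied only to almost-Boolean (step) formulas and universal second-order quantification is forbidden, exactly the fragment for which the translation to weighted automata goes through. Chaining the two equivalences yields: $s$ is \emph{LDL}-definable $\iff$ $s$ is recognizable $\iff$ $s$ is definable by a restricted weighted \emph{MSO} sentence over $A$ and $K$, which is precisely the corollary.

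Since this is a corollary and every ingredient is a previously stated result, there is essentially no real obstacle. The only point requiring a touch of care is making sure the commutativity hypothesis used in Corollary~\ref{LDL_to_REC} is the same commutativity hypothesis under which the weighted \emph{MSO} B\"uchi theorem holds; in \cite{Dr:We,Dr:Wh} the characterization of recognizable series by restricted weighted \emph{MSO} sentences is stated for arbitrary (commutative) semirings, so the hypotheses align and no extra assumption is needed. One should also note, for completeness, that the direction ``recognizable $\Rightarrow$ restricted-\emph{MSO}-definable'' and its converse are both part of the cited result, so no single implication is missing. Thus the proof is a two-line composition of Corollary~\ref{LDL_to_REC} with the cited weighted \emph{MSO} characterization, and I would present it as such without any further calculation.
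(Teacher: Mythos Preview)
Your proposal is correct and matches the paper's own argument essentially verbatim: the paper states that the corollary is ``immediately derived by \cite{Dr:We,Dr:Wh} and Corollary~\ref{LDL_to_REC}'', i.e., exactly the two-step chain you describe. Your additional remark about the commutativity hypothesis aligning in both cited results is a fair sanity check but introduces nothing new.
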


\begin{corollary}
\label{last_col}Let $K$ be a totally commutative complete semiring and
$A$ an alphabet. A series $s\in K\left\langle \left\langle A^{\omega
}\right\rangle \right\rangle $ is LDL-$\omega$-definable iff it is definable
by a restricted weighted MSO logic sentence over $A$ and $K$ interpreted over
infinite words.
\end{corollary}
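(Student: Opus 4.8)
The plan is to obtain Corollary~\ref{last_col} by transitively composing two biconditionals that are already available at this point in the paper. The first is Corollary~\ref{LDL_to_MSO_omega}: over a totally commutative complete semiring $K$, a series $s\in K\langle\langle A^{\omega}\rangle\rangle$ is LDL-$\omega$-definable (in the sense of Definition~\ref{def-sem_omega}) if and only if it is $\omega$-recognizable. The second is the weighted B\"{u}chi-type theorem of \cite{Dr:Wh}: over the same class of semirings, a series $s\in K\langle\langle A^{\omega}\rangle\rangle$ is $\omega$-recognizable if and only if it is definable by a restricted weighted \emph{MSO} sentence interpreted over infinite words. Chaining these two equivalences yields exactly the claim.

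Concretely, I would proceed as follows. (i) Recall the precise meaning of ``restricted'' weighted \emph{MSO} sentence as used in \cite{Dr:Wh}, namely the fragment in which universal first-order quantification is permitted only on suitably restricted (``almost Boolean'', recognizable-step) subformulas and unrestricted universal second-order quantification is forbidden; this is exactly the fragment for which \cite{Dr:Wh} establishes the coincidence with $\omega$-recognizability. (ii) Verify that the standing hypotheses of that B\"{u}chi theorem are met by the semirings fixed here: the theorem is proved for semirings equipped with the infinitary sum and the countable product operations subject to the distributivity and commutativity laws, which is precisely what the notion of \emph{totally commutative complete} semiring introduced in Section~4 provides; in particular, the final equation in the definition of totally commutative complete is the commutativity law on infinite products that the construction of \cite{Dr:Wh} requires. (iii) State the two biconditionals and conclude by transitivity, using Corollary~\ref{LDL_to_MSO_omega} for one half and \cite{Dr:Wh} for the other.

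The only genuine obstacle is bookkeeping: ensuring that the semiring hypotheses and the notion of $\omega$-recognizability employed in \cite{Dr:Wh} coincide with those fixed in Section~4. Since the definition of totally commutative complete semiring has been chosen precisely so as to satisfy the axioms under which the weighted B\"{u}chi theorem holds, no argument is needed beyond invoking that theorem together with Corollary~\ref{LDL_to_MSO_omega}; the proof is then a one-line composition once these identifications have been made. This mirrors the finite-word situation of Corollary~\ref{LDL_to_MSO}, which follows analogously from \cite{Dr:We}, \cite{Dr:Wh}, and Corollary~\ref{LDL_to_REC}.
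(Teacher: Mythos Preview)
Your proposal is correct and matches the paper's own approach: the corollary is obtained directly by chaining Corollary~\ref{LDL_to_MSO_omega} (LDL-$\omega$-definable iff $\omega$-recognizable over totally commutative complete semirings) with the weighted B\"{u}chi theorem of \cite{Dr:Wh} ($\omega$-recognizable iff definable by a restricted weighted \emph{MSO} sentence). Your remarks on verifying the semiring hypotheses are appropriate bookkeeping, but no further argument is needed.
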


Weighted \emph{LTL} has been investigated over De Morgan algebras \cite{Ku:La}, arbitrary bounded lattices \cite{Dr:Mu}, idempotent zero-divisor free totally commutative complete semirings \cite{Ma:Ph,Ma:Se}, with averaging modalities \cite{Bo:Av}, with discounting over the interval $[0,1]$ \cite{Al:Fo,Al:Di}, and with discounting over the max-plus semiring \cite{Ma:Ph,Ma:We}. Recently, a type of weighted \emph{LTL} has been applied to robotics \cite{La:Th}.
We need to recall
first the classical \emph{LTL} (cf. \cite{Ba:Pr}). For every
letter $a\in A$ we consider an atomic proposition $p_{a}$ and we let
$P=\{p_{a}\mid a\in A\}$.  The syntax of \emph{LTL} formulas over $A$\ is given
by the grammar $\phi::=true\mid p_{a}\mid\lnot\phi\mid\phi\vee\phi\mid\bigcirc\phi\mid\phi U\phi$ 
where $p_{a}\in P$. Let $\phi$ be an \emph{LTL }formula over $A$. For every
$w=a_{0}\ldots a_{n-1}\in A^{\ast}$ and $0\leq i\leq n-1$ (resp. $w=a_{0}a_{1}\ldots\in A^{\omega}$ and $i\geq0$) the satisfaction
relation $w,i\models\phi$ is defined as usual (cf. for instance \cite{Ba:Pr,Di:Fi}) by induction on the structure of $\phi$.

The syntax of formulas $\varphi$ of the \emph{weighted LTL over} $A$ \emph{and}
$K$ is given by the grammar
\begin{align*}
\varphi &  ::=k\mid\phi\mid\varphi\oplus\varphi\mid\varphi\otimes\varphi
\mid\pmb{\bigcirc}\varphi\mid\varphi\mathcal{U}\varphi\mid\boxtimes\varphi
\end{align*}
where $k\in K$, $p_{a}\in P$, and $\phi$ is an \emph{LTL} formula over $A$.

We denote by $LTL(K,A)$ the class of all weighted \emph{LTL} formulas
$\varphi$ over $A$ and $K$. Firstly, we represent the semantics $\left\Vert
\varphi\right\Vert $ of formulas $\varphi\in LTL(K,A)$ as series in
$K\left\langle \left\langle A^{\ast}\right\rangle \right\rangle $. For the
semantics of \emph{LTL} formulas $\phi$ we use the satisfaction relation as
defined above. 

\begin{definition}
\label{def-ltl}Let $\varphi\in LTL(K,A)$. The \emph{semantics} of $\varphi$ is
a series $\left\Vert \varphi\right\Vert \in K\left\langle \left\langle
A^{\ast}\right\rangle \right\rangle $. For every $w\in A^{\ast}$, with $|w|=n$ ($n \geq 0$), the value
$\left\Vert \varphi\right\Vert (w)$ is defined inductively as follows:

$\begin{array}[c]{ll}
\left\Vert k\right\Vert (w)=k, & \ \ \ \ \ \ \ \ \ \left\Vert \varphi\oplus\psi\right\Vert (w)=\left\Vert
\varphi\right\Vert (w)+\left\Vert \psi\right\Vert (w),\\
\left\Vert \phi\right\Vert (w)=\left\{
\begin{array}
[c]{ll}%
1 & \text{if }w\models\phi\\
0 & \text{otherwise}%
\end{array}
\right.  ,     & \ \ \ \ \ \ \ \ \ \left\Vert \varphi\otimes\psi\right\Vert (w)=\left\Vert
\varphi\right\Vert (w)\cdot\left\Vert \psi\right\Vert (w), \\
\left\Vert \pmb{\bigcirc}\varphi\right\Vert (w)=\left\Vert
\varphi\right\Vert (w_{\geq1}), & \ \ \ \ \ \ \ \ \  \left\Vert \boxtimes\varphi\right\Vert (w)=\underset{0\leq i\leq
n-1}{%
{\displaystyle\prod}
}\left\Vert \varphi\right\Vert (w_{\geq i}),
\end{array}$ 

$\begin{array}[c]{ll}
\left\Vert \varphi\mathcal{U}\psi\right\Vert (w)=\underset{0\leq
i\leq n-1}{%
{\displaystyle\sum}
}\left(  \left(  \underset{0\leq j<i}{%
{\displaystyle\prod}
}\left\Vert \varphi\right\Vert (w_{\geq j})\right)  \cdot\left\Vert
\psi\right\Vert (w_{\geq i})\right).
\end{array}$
\end{definition}

A series $s\in K\left\langle \left\langle A^{\ast
}\right\rangle \right\rangle $ is called \emph{LTL-definable} if there is
a formula $\varphi\in LTL\left(  K,A\right)  $ such that $s=\left\Vert
\varphi\right\Vert $.

\begin{example}\label{ex4}
We consider the semiring $(\mathbb{N},+,\cdot,0,1)$ of natural numbers and the LTL formulas $\varphi = \boxtimes 2$ and $\psi=\boxtimes \varphi$. Then, we can easily see that for every $w \in A^*$, we get $\|\varphi \|(w)=2^{|w|}$ and $\|\psi \|(w)=2^{2^{|w|}}$. It is well known (cf. Ex. 3.4 in \cite{Dr:We}) that the series $\|\psi\|$ is not recognizable, and hence by Corollary \ref{LDL_eq_RE_com} not LDL-definable. 
\end{example}

By Examples \ref{ex1} and \ref{ex4} we immediately obtain the following proposition.

\begin{proposition}\label{LTL_inc_LDL}
The classes of LDL-definable and LTL-definable series over the semiring of natural numbers and an alphabet $A$ are incomparable.
\end{proposition}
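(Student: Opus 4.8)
The plan is to establish the two non-inclusions separately, each witnessed by one of the series constructed earlier.

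For $LDL\not\subseteq LTL$, I would take the weighted \emph{LDL} formula $\varphi$ of Example~\ref{ex1}, whose semantics satisfies $\left\Vert\varphi\right\Vert(a^{2n})=k^{2n}$ for $n\geq 0$ and $\left\Vert\varphi\right\Vert(w)=0$ for every other $w\in A^{\ast}$; in particular $\operatorname{supp}\left\Vert\varphi\right\Vert=(aa)^{\ast}$. This series is \emph{LDL}-definable by construction, so it suffices to show it is not \emph{LTL}-definable. Suppose $\left\Vert\varphi\right\Vert=\left\Vert\varphi'\right\Vert$ for some $\varphi'\in LTL(\mathbb{N},A)$. Since $\mathbb{N}$ is zero-sum free and zero-divisor free, the support of a weighted \emph{LTL} formula over $\mathbb{N}$ coincides with the (Boolean) language of the unweighted \emph{LTL} formula $\varphi''$ obtained by replacing every nonzero constant in $\varphi'$ by $true$; hence $L(\varphi'')=\operatorname{supp}\left\Vert\varphi'\right\Vert=(aa)^{\ast}$. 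But \emph{LTL}-definable languages are first-order (equivalently, star-free) and $(aa)^{\ast}$ is not (cf.~\cite{Di:Fi}), a contradiction.

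For $LTL\not\subseteq LDL$, I would take $\psi=\boxtimes\boxtimes 2\in LTL(\mathbb{N},A)$ from Example~\ref{ex4}, so that $\left\Vert\psi\right\Vert(w)=2^{2^{|w|}}$ for every $w\in A^{\ast}$. This series is \emph{LTL}-definable but not recognizable (Example~3.4 of~\cite{Dr:We}); since $\mathbb{N}$ is commutative, Corollary~\ref{LDL_eq_RE_com} then shows it is not \emph{LDL}-definable.

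Combining the two observations, neither of the classes of \emph{LDL}-definable and \emph{LTL}-definable series over $\mathbb{N}$ and $A$ is contained in the other, which is the assertion. The only point requiring a little care is the support computation in the first part, but it is routine: over $\mathbb{N}$ a sum vanishes iff all summands do and a product vanishes iff some factor does, so on supports the weighted \emph{LTL} semantics collapses to the Boolean one. The non-recognizability needed for the second part is already quoted from~\cite{Dr:We}, so there is no real obstacle.
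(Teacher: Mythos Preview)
Your proposal is correct and follows exactly the paper's approach: the paper's proof simply says ``By Examples~\ref{ex1} and~\ref{ex4} we immediately obtain the following proposition,'' and you have spelled out precisely the content of those two examples, including the support argument via zero-sum-freeness and zero-divisor-freeness of $\mathbb{N}$ (implicit in Example~\ref{ex1}) and the non-recognizability argument combined with Corollary~\ref{LDL_eq_RE_com} (explicit in Example~\ref{ex4}).
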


Next, we represent the semantics of formulas in $LTL(K,A)$ as infinitary 
series in $K\left\langle \left\langle A^{\omega}\right\rangle \right\rangle $.

\begin{definition}
\label{def-omega-ltl}Let $K$ be a totally complete semiring and $\varphi\in
LTL(K,A)$. The \emph{semantics} \emph{of }$\varphi$ \emph{over infinite words}
is an infinitary series $\left\Vert \varphi\right\Vert _{\omega}\in
K\left\langle \left\langle A^{\omega}\right\rangle \right\rangle $. For every
$w\in A^{\omega}$ the value $\left\Vert \varphi\right\Vert _{\omega}(w)$ is
defined inductively as in the case of finite words except for the operators
$\mathcal{U}$ and $\boxtimes$:

$\begin{array}[c]{ll}
\left\Vert \varphi\mathcal{U}\psi\right\Vert _{\omega}%
(w)=\underset{i\geq0}{%
{\displaystyle\sum}
}\left(  \left(  \underset{0\leq j<i}{%
{\displaystyle\prod}
}\left\Vert \varphi\right\Vert_{\omega} (w_{\geq j})\right)  \cdot\left\Vert
\psi\right\Vert_{\omega} (w_{\geq i})\right)  , \\
\left\Vert \boxtimes\varphi\right\Vert _{\omega}(w)=\underset
{i\geq0}{%
{\displaystyle\prod}
}\left\Vert \varphi\right\Vert_{\omega} (w_{\geq i}).
\end{array}$
\end{definition}

A series  $s\in K\left\langle \left\langle
A^{\omega}\right\rangle \right\rangle $ is called \emph{LTL}-$\omega$\emph{-definable} if there is
a formula $\varphi\in LTL\left(  K,A\right)  $ such that  $s=\left\Vert \varphi\right\Vert _{\omega}$.
In view of Proposition \ref{LTL_inc_LDL}, we define a fragment of our weighted \emph{LTL}, and show that the class of series (resp. infinitary series) defined by \emph{LTL} formulas in this fragment is in the class of \emph{LDL}-definable (resp. \emph{LDL}-$\omega$-definable) ones. More precisely, an
\emph{LTL-step formula} is an $LTL(K,A)$ formula of the form $\oplus_{1\leq
i\leq n}\left(  k_{i}\otimes\varphi_{i}\right)  $ where $k_{i}\in K$ and
$\varphi_{i}$ is an \emph{LTL }formula for every $1\leq i\leq n$. Then, we call a 
formula $\varphi\in LTL\left(  K,A\right)  $ \emph{restricted} if
whenever it contains a subformula of the form $\boxtimes\psi$ or
$\psi\mathcal{U}\xi$,\ then $\psi$ is an \emph{LTL}-step formula. We shall denote by $rLTL(K,A)$ the set of all restricted $LTL(K,A)$ formulas. 
A series $s\in K\left\langle \left\langle A^{\ast
}\right\rangle \right\rangle $ (resp. $s\in K\left\langle \left\langle
A^{\omega}\right\rangle \right\rangle $) is called \emph{rLTL}%
-\emph{definable} (resp. \emph{rLTL}-$\omega$\emph{-definable} ) if there is
a formula $\varphi\in rLTL\left(  K,A\right)  $ such that $s=\left\Vert
\varphi\right\Vert $ (resp. $s=\left\Vert \varphi\right\Vert _{\omega}$). 
By an inductive construction, we can show that every \emph{rLTL}-definable (resp. \emph{rLTL}-$\omega$-definable) series is also definable (resp. $\omega$-definable) by a restricted weighted \emph{FO} logic sentence in the sense of \cite{Dr:We}. Therefore, by Corollaries \ref{LDL_to_MSO} and \ref{last_col}, we get respectively, the subsequent results.
  
\begin{theorem}\label{ULTL_to_LDL}
Let $K$ be a commutative semiring and $A$ an alphabet. If a series $s\in K\left\langle \left\langle A^{\ast}\right\rangle
\right\rangle $ is rLTL-definable, then it is LDL-definable.
\end{theorem}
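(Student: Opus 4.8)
The plan is to prove the statement by reducing it to the weighted \emph{FO} logic machinery already available through Corollary \ref{LDL_to_MSO}. The key observation is that an \emph{rLTL} formula only allows the problematic operators $\boxtimes$ and $\mathcal{U}$ to be applied when their first argument is an \emph{LTL}-step formula $\bigoplus_{1\le i\le n}(k_i\otimes\varphi_i)$, i.e.\ a formula whose semantics takes only finitely many values, each prescribed by a Boolean \emph{LTL} condition. This is exactly the kind of restriction that keeps the weighted \emph{FO} fragment of \cite{Dr:We} within the recognizable series, so the natural route is: translate every $\varphi\in rLTL(K,A)$ into a restricted weighted \emph{FO} logic sentence $\Phi$ with $\|\Phi\|=\|\varphi\|$, and then invoke Corollary \ref{LDL_to_MSO} to conclude that $\|\varphi\|$ is \emph{LDL}-definable.

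First I would carry out the inductive translation $\varphi\mapsto\Phi$ by structural induction on $rLTL(K,A)$ formulas. The base cases $\varphi=k$ and $\varphi=\phi$ (an \emph{LTL} formula) are immediate, since \emph{LTL} formulas are \emph{FO}-definable (cf.\ \cite{Di:Fi}) and the Boolean translation can be relativized to the position variable $0$; the constant $k$ becomes the \emph{FO}-step sentence with that single constant. The cases $\varphi=\varphi_1\oplus\varphi_2$ and $\varphi=\varphi_1\otimes\varphi_2$ follow by closing the \emph{FO} fragment under $+$ and $\odot$, which it is by definition of the restricted weighted \emph{FO} logic. For $\varphi=\pmb{\bigcirc}\varphi'$, I would use a standard successor-shift of the position variable. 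The interesting cases are $\varphi=\boxtimes\psi$ and $\varphi=\psi\mathcal{U}\xi$ with $\psi$ an \emph{LTL}-step formula: here the semantics $\|\boxtimes\psi\|(w)=\prod_{0\le i\le n-1}\|\psi\|(w_{\ge i})$ becomes a universally-quantified product $\bigwedge_{x}\bigoplus_{i}(k_i\otimes\varphi_i(x))$ in the \emph{FO} syntax (the universal product quantifier applied to a step formula — precisely what "restricted" permits in \cite{Dr:We}), and the $\mathcal{U}$ case becomes an existential sum over a cut position combined with such a universal product over the prefix. In each case, because $\psi$ is a step formula, the resulting \emph{FO} sentence stays in the restricted fragment.

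Next, having $\|\Phi\|=\|\varphi\|$ with $\Phi$ a restricted weighted \emph{MSO} (in fact \emph{FO}) sentence, I would simply apply Corollary \ref{LDL_to_MSO}: over a commutative semiring a series is \emph{LDL}-definable iff it is definable by a restricted weighted \emph{MSO} sentence. This yields a weighted \emph{LDL} formula $\varphi'$ with $\|\varphi'\|=\|\Phi\|=\|\varphi\|$, completing the proof.

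The main obstacle I anticipate is making the translation of $\boxtimes$ and $\mathcal{U}$ precise at the level of the \emph{FO}/\emph{MSO} syntax of \cite{Dr:We}: one must check that relativizing an \emph{LTL}-step formula $\psi$ to a free position variable produces an \emph{MSO}-step formula in the technical sense of \cite{Dr:We} (boolean combination of unweighted \emph{FO} formulas weighted by constants), and that wrapping it in a universal first-order product quantifier lands inside the syntactically restricted fragment rather than the full \emph{MSO} logic. This is essentially the reason the "step formula" condition was built into the definition of $rLTL$, so the check should go through smoothly, but it is where the bookkeeping lies; the $\oplus$/$\otimes$/$\pmb{\bigcirc}$ and base cases are routine by comparison.
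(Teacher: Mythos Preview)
Your proposal is correct and follows essentially the same approach as the paper: translate every $rLTL(K,A)$ formula by structural induction into a restricted weighted \emph{FO} logic sentence in the sense of \cite{Dr:We}, and then invoke Corollary~\ref{LDL_to_MSO}. The paper only states the inductive construction in one sentence, whereas you spell out the individual cases (including the crucial observation that the step-formula restriction on the first argument of $\boxtimes$ and $\mathcal{U}$ is precisely what keeps the universal product quantifier inside the restricted fragment), but the route is the same.
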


\begin{theorem}\label{ULTL_to_LDL_omega}
Let $K$ be a totally commutative complete semiring and $A$ an
alphabet. If a series $s\in K\left\langle \left\langle A^{\omega}\right\rangle
\right\rangle $ is rLTL-$\omega$-definable, then it is LDL-$\omega$-definable.
\end{theorem}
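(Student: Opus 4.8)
The plan is to reduce Theorem \ref{ULTL_to_LDL_omega} to Corollary \ref{last_col} by showing that every \emph{rLTL}-$\omega$-definable series is $\omega$-definable by a restricted weighted \emph{FO} logic sentence in the sense of \cite{Dr:We,Dr:Wh}. Once this is established, Corollary \ref{last_col} immediately yields that the series is \emph{LDL}-$\omega$-definable. Thus the entire work lies in constructing, from a given formula $\varphi\in rLTL(K,A)$, an equivalent restricted weighted \emph{MSO} (in fact \emph{FO}) sentence, interpreted over infinite words.

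First I would set up the standard first-order encoding of positions: a finite or infinite word $w$ is viewed as a relational structure with universe $\{0,1,\ldots\}$, the successor/order relations, and unary predicates $P_a$ for $a\in A$. The classical (Boolean) \emph{LTL}-to-\emph{FO} translation (cf. \cite{Di:Fi}) gives, for every \emph{LTL} formula $\phi$ and every position variable $x$, an \emph{FO} formula $\widehat{\phi}(x)$ such that $w,i\models\phi$ iff $w\models\widehat{\phi}(i)$; in particular $\widehat{\phi}$ uses only the order predicate and the $P_a$'s and no second-order quantification. Then I would define, by induction on the structure of $\varphi\in LTL(K,A)$, a weighted \emph{FO} formula $\Phi(x)$ with one free position variable such that $\llbracket\Phi\rrbracket_\omega(w,i)=\llbracket\varphi\rrbracket_\omega(w_{\geq i})$ for all $w$ and positions $i$, and finally take the sentence $\exists x.\bigl(\mathrm{first}(x)\wedge\Phi(x)\bigr)$ (or simply substitute the minimal position), which defines $\llbracket\varphi\rrbracket_\omega$. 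The clauses are routine: a constant $k$ becomes the weighted \emph{FO} constant $k$; a Boolean \emph{LTL} subformula $\phi$ becomes the characteristic \emph{FO} formula $\widehat{\phi}(x)$; $\oplus$ and $\otimes$ become weighted disjunction $\vee$ and conjunction $\wedge$; $\pmb{\bigcirc}\varphi$ becomes $\exists y.(y=x+1\wedge\Phi(y))$; $\varphi\,\mathcal{U}\,\psi$ becomes $\exists y.\bigl(x\leq y\wedge\Psi(y)\wedge\forall z.((x\leq z\wedge z<y)\rightarrow\Phi(z))\bigr)$, matching the semantics $\sum_{i\geq 0}(\prod_{0\leq j<i}\cdots)\cdot(\cdots)$; and $\boxtimes\varphi$ becomes $\forall y.(x\leq y\rightarrow\Phi(y))$, matching the infinite product $\prod_{i\geq 0}\llbracket\varphi\rrbracket_\omega(w_{\geq i})$ (here totally completeness of $K$ guarantees the product is well-defined, exactly as in Definition \ref{def-omega-ltl}).

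The key point — and the reason the \emph{restricted} hypothesis is essential — is that the resulting \emph{FO} sentence must be \emph{restricted} in the sense of \cite{Dr:We,Dr:Wh}, i.e., universal weighted quantifiers $\forall$ and the weighted conjunctions under their scope may only be applied to \emph{recognizable step functions} (equivalently, to almost-Boolean formulas). The translation of $\boxtimes\psi$ and $\psi\,\mathcal{U}\,\xi$ introduces a $\forall y$ (and a $\prod$) over the translation $\Psi$ of $\psi$; hence I must ensure $\Psi(y)$ is a recognizable step formula. This is precisely where the definition of $rLTL(K,A)$ pays off: whenever $\varphi$ contains $\boxtimes\psi$ or $\psi\,\mathcal{U}\,\xi$, the subformula $\psi$ is an \emph{LTL}-step formula $\oplus_{1\leq i\leq n}(k_i\otimes\varphi_i)$ with each $\varphi_i$ a Boolean \emph{LTL} formula; its translation is $\bigvee_{1\leq i\leq n}(k_i\wedge\widehat{\varphi_i}(y))$, which is a recognizable step formula since each $\widehat{\varphi_i}$ is Boolean \emph{FO} (its language, and hence the induced partition of $A^\omega$, is $\omega$-recognizable, being $\omega$-star-free). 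So the syntactic restriction transfers verbatim to the \emph{FO} side, and $\Phi$ lies in the restricted fragment.

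The main obstacle is bookkeeping the restriction condition through the induction while also handling free variables correctly: one has to verify that the recognizability/step-function property of the subformulas appearing under $\forall$ and weighted $\wedge$ is preserved, and that the classical \emph{LTL}-to-\emph{FO} translation really produces formulas whose induced partitions are $\omega$-recognizable — this is where one invokes that Boolean \emph{LTL}-definable $\omega$-languages are $\omega$-star-free, hence $\omega$-recognizable. A secondary subtlety is that in \cite{Dr:We,Dr:Wh} the \emph{MSO}/\emph{FO} semantics is defined for sentences (possibly relative to a valuation), so one must phrase the induction for formulas with a single free first-order variable and then close it off; this is standard but needs a careful statement. Apart from these points, every step is a direct semantic unfolding that matches Definition \ref{def-omega-ltl} term by term, and the conclusion then follows from Corollary \ref{last_col}. (The finite-word case, Theorem \ref{ULTL_to_LDL}, is entirely analogous using Corollary \ref{LDL_to_MSO} in place of Corollary \ref{last_col}.)
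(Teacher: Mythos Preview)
Your proposal is correct and follows exactly the approach the paper indicates: reduce to Corollary \ref{last_col} by translating each $rLTL(K,A)$ formula inductively into a restricted weighted \emph{FO} sentence in the sense of \cite{Dr:We,Dr:Wh}, using the restriction on $\boxtimes\psi$ and $\psi\,\mathcal{U}\,\xi$ to guarantee that the subformulas appearing under universal quantification are step formulas. The paper only says ``by an inductive construction'' without spelling out the clauses; your detailed translation is precisely what that sketch intends.
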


A weighted $\mu$-calculus over a particular class of semirings was investigated in \cite{Me:We} (cf. also \cite{La:Al}). More precisely, the author showed that the class of rational (resp. $\omega$-rational) series over dc-semirings with the Arden fixed point property (resp. with infinite products and the Arden fixed point property) coincides with the class of series (resp. infinitary series) definable by the weighted conjunction-free $\mu$-calculus. Therefore, by Corollaries \ref{LDL_eq_RE_com}, \ref{LDL_to_MSO_omega} and Theorem 4.5 in \cite{Me:We}, we immediately obtain the following theorem.

\begin{theorem}
Let $K$ be a commutative (resp. totally commutative complete) dc-semiring with the Arden fixed point property and $A$ an alphabet. Then a series $s\in K\left\langle \left\langle A^{*}\right\rangle \right\rangle $ (resp. $s\in K\left\langle \left\langle A^{\omega}\right\rangle \right\rangle $) is LDL-definable (resp. LDL-$\omega$-definable)  iff it is definable by a sentence of the weighted conjunction-free $\mu$-calculus over $A$ and $K$. 
\end{theorem}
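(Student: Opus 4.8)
The plan is to obtain the theorem as a short chain of equivalences built from results already available, with no new automata- or expression-level construction. First I would isolate from \cite{Me:We} the precise statement needed: there (Theorem 4.5) it is shown that over a dc-semiring with the Arden fixed point property the rational series over $A$ and $K$ are exactly the series definable by sentences of the weighted conjunction-free $\mu$-calculus, and that over such a semiring equipped in addition with infinite products the $\omega$-rational series are exactly the infinitary series definable by such sentences interpreted over infinite words. The hypotheses of the present theorem are tailored so that both this result and our earlier corollaries apply at once: commutativity (resp. total commutative completeness) is what we need in order to identify LDL-definability with recognizability, while the dc-semiring and Arden fixed point hypotheses are exactly what \cite{Me:We} requires.

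For finite words I would then argue as follows. Since $K$ is commutative, Corollary \ref{LDL_eq_RE_com} gives that $s$ is LDL-definable iff it is recognizable; the Sch\"{u}tzenberger theorem (Theorem \ref{Kleene}) identifies the recognizable series with the rational ones; and Theorem 4.5 of \cite{Me:We} identifies the rational series with the weighted conjunction-free $\mu$-calculus definable ones. Composing these three equivalences proves the finite-word half. For infinite words the argument is the exact parallel: since $K$ is totally commutative complete, Corollary \ref{LDL_to_MSO_omega} gives that $s$ is LDL-$\omega$-definable iff it is $\omega$-recognizable; Theorem \ref{Kleene-omega} gives $\omega$-recognizable iff $\omega$-rational; and \cite{Me:We} gives $\omega$-rational iff definable by a weighted conjunction-free $\mu$-calculus sentence over infinite words. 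Again the three equivalences compose, and the theorem follows.

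The one point that is not purely formal --- and what I expect to be the only real obstacle --- is matching conventions: one must verify that the class of weighted rational series and of weighted $\omega$-rational series over which \cite{Me:We} operates coincides with the class defined in Section 2 of the present paper, i.e.\ that ``rational'' there means ``rational'' here. This is essentially built into the way the Kleene-type theorem is phrased in \cite{Me:We}, the Arden fixed point property being precisely the hypothesis that guarantees the iteration constructs $(\cdot)^{+}$ and $(\cdot)^{\omega}$ behave as required. Once this identification is recorded, the theorem is immediate. Note in particular that g-rational expressions and the Hadamard product play no explicit role beyond what is already absorbed, via commutativity, into Corollaries \ref{LDL_eq_RE_com} and \ref{LDL_to_MSO_omega}, so the proof can reasonably be presented as a one-line deduction after citing the three ingredients.
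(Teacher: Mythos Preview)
Your proposal is correct and follows essentially the same approach as the paper: the paper derives the theorem immediately from Corollaries \ref{LDL_eq_RE_com} and \ref{LDL_to_MSO_omega} together with Theorem 4.5 of \cite{Me:We}, which is exactly your chain of equivalences (you merely make the intermediate Sch\"{u}tzenberger/Kleene step explicit). Your remark about matching the notion of rationality in \cite{Me:We} with the one used here is a fair caveat, but the paper treats this identification as understood and presents the result as a one-line deduction, just as you anticipate.
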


\section{Conclusion}

We introduced a weighted linear dynamic logic for finite (resp. infinite) words over
arbitrary (resp. totally complete) semirings and proved the expressive
equivalence of formulas of this logic with generalized weighted rational (resp.
$\omega$-rational) expressions. In our proofs we used structural induction for both
directions. We proved also that the translation of any weighted \emph{LDL} formula to a weighted automaton can be done as well, by structural induction, using the corresponding translation of \cite{Gi:Li,Gi:Sy} and well-known constructions on weighted automata. More interestingly, for the applications, the time complexity of the translation does not increase in the weighted setup. We recalled the weighted \emph{LTL} and showed that the class of series defined by weighted \emph{LTL} and weighted \emph{LDL} formulas are, in general, incomparable, in contrast to the well known relation for classical logics. We defined a fragment of weighted \emph{LTL}, which is larger than the one in recent works \cite{Ma:Ph,Ma:Se}, and showed that \emph{LTL}-definable (resp. \emph{LTL}-$\omega$-definable) series in this fragment are also \emph{LDL}-definable (resp. \emph{LDL}-$\omega$-definable). Recent applications require weighted automata (resp. weighted automata with input infinite words) over more general structures than semirings, for instance incorporating average or discounted computations of weights \cite{Ch:Qu,Dr:Re,Dr:Av}. Therefore, it should be very interesting, especially for applications, to explore the expressive power of a weighted \emph{LDL} over more  general weight structures.

\nocite{*}
\bibliographystyle{eptcs}
\bibliography{paper-bibliography-final}





\end{document}